\newtheorem{definition}{Definition}
\newtheorem{lemma}{Lemma}
\newtheorem{proposition}{Proposition}
\newtheorem{theorem}{Theorem}
\newtheorem{cor}{Corollary}
\newcommand{\nat}{\mathbb N} 
\newcommand{\half}{\tfrac{1}{2}} 
\newcommand{\hi}{\mathcal{H}} 
\newcommand{\hik}{\mathcal{K}} 
\newcommand{\lin}[1]{\mathcal{L}(#1)} 
\newcommand{\lh}{\lin{\hi}} 
\newcommand{\trh}{\mathcal{T(H)}} 
\newcommand{\trhh}{\mathcal{T(\hi\otimes\hi)}} 
\newcommand{\tr}[1]{\mathrm{tr}\left[#1\right]} 
\newcommand{\id}{\mathbbm{1}} 
\newcommand{\Ec}{\mathcal{E}}
\newcommand{\la}{\langle}
\newcommand{\ra}{\rangle}
\newcommand{\mc}[1]{\mathcal{#1}}
\newcommand{\mb}[1]{\mathbb{#1}}
\newcommand{\mr}[1]{\mathrm{#1}}
\newcommand{\ms}[1]{\mathsf{#1}}
\newcommand{\sis}[2]{\la #1|#2\ra}
\newcommand{\Sis}[2]{\big\la #1\big| #2\big\ra}
\newcommand{\ktb}[2]{| #1\ra\la #2|}
\newcommand{\hil}{\mc H}
\newcommand{\f}{\varphi}
\newcommand{\Om}{\Omega}
\title[Extremal marginals of a completely positive map]{When do pieces determine the whole? \\
Extremal marginals of a completely positive map}
\author{Erkka Haapasalo}
\author{Teiko Heinosaari}
\author{Juha-Pekka Pellonp\"{a}\"{a}}
\begin{document}

\begin{abstract}
We will consider completely positive maps defined on tensor products of von Neumann algebras and taking values in the algebra of bounded operators on a Hilbert space and particularly certain
convex subsets of the set of such maps. We show that when one
of the marginal maps of such a map is an extremal point, then the marginals uniquely determine the map.
We will further prove that when both of the marginals are extremal, then the whole map is extremal. 
We show that this general result is the common source of several well-known results dealing with, e.g., jointly measurable observables. 
We also obtain new insight especially in the realm of quantum instruments and their marginal observables and channels.
\end{abstract}

\maketitle

\section{Introduction}

A typical feature of quantum theory is that we have many layers of descriptions, differing by their amount of details.
Usually the choice of the descriptions depends on our needs or on available information.
We can use highly complete or total description of some device or setting, but also less detailed description of the same thing if that is more suitable for our purposes or we simply lack information.
A prototypical example is joint states versus reduced states. 
A joint state describes a state of a composite system, whereas a reduced state describes a state of one of the subsystems. 
A generic joint state contains more information than the related reduced states, and even if we know the reduced states of all subsystems, we may not know the joint state.

In a typical case lower layer description (e.g.\  reduced states) does not uniquely determine higher layer description (joint state), since variation is greater in the latter and even big differences can lead to the same coarser descriptions. 
However, in some special cases pieces can be combined into a whole in a unique way.
In this work we identify one such condition to be extremality of a piece and we unify several results that fall under this theme.
By an extremal object we mean an extremal point in the convex set of all similar objects. (This total set can be e.g.\  the set of states, observables, instruments or channels.)

We will consider objects consisting of two pieces.
We show that, in a well specified setting, 
\begin{quote}
\emph{when one of the pieces is an extremal object, then pieces uniquely determine the whole}.
\end{quote}
We will further prove that
\begin{quote}
\emph{when both of the pieces are extremal objects, then the whole is an extremal object}.
\end{quote}

We recall the following well-known results that exemplify the previously sketched ideas.

\begin{itemize}

\item[(a)] \emph{Joint state with a pure marginal state}: Suppose that $\varrho$ is a state of a composite system $\hil_1\otimes\hil_2$. If one of the reduced states $\mr{tr}_{\hil_2}[\varrho]\equiv\varrho_1$ or $\mr{tr}_{\hil_1}[\varrho]\equiv\varrho_2$ is pure, then $\varrho=\varrho_1\otimes\varrho_2$.
If both $\varrho_1$ and $\varrho_2$ are pure, then also $\varrho$ is pure.

\item[(b)] \emph{Joint observable with a sharp marginal observable}: Suppose that $\ms{M}$ and $\ms{N}$ are jointly measurable observables (POVMs). 
If $\ms{M}$ or $\ms{N}$ is sharp (i.e.\  projection valued measure), then their joint observable $\ms{J}$ is unique and it is determined by the condition $\ms{J}(X\times Y)=\ms{M}(X)\ms{N}(Y)$ for all outcome sets $X,Y$.
(See e.g.\  \cite{HeReSt08} for a proof of this fact.)

\item[(c)] \emph{Instruments related to a sharp observable}: Suppose that an observable $\ms{M}$ is sharp and $\Gamma$ is an instrument such that $\Gamma(X,\id)=\ms{M}(X)$ for all outcome sets $X$.
Then $\Gamma(X,A)=\ms M(X)\mc E(A)$, where $\mc E=\Gamma(\Om,\cdot)$ is the total state transformation and $\Om$ is the total set \cite{Ozawa84}. Hence the instrument $\Gamma$ is completely determined by its total state transformation $\mc E$. 

\item[(d)] \emph{Variant of `No Cloning Theorem'}: Suppose $\mc F:\trh\to\trhh$ is a quantum channel such that $\mr{tr}_{1}[\mc F(\varrho)]=\varrho$ for every state $\varrho$. 
Then $\mr{tr}_{2}[\mc F(\varrho)]\equiv\sigma$ for some fixed state $\sigma$, hence the attempted copy $\mr{tr}_{2}[\mc F(\varrho)]$ contains no information on the input state $\varrho$.

\end{itemize}

\noindent Our main result contains all these statements as corollaries and identifies the common source behind the uniqueness claims  as being extremality of a marginal map.
We will also demonstrate some new applications of this result.
Our main theorem implies the following:

\begin{itemize}

\item[(e)] Suppose that $\ms{M}$ and $\ms{N}$ are jointly measurable observables (POVMs). 
If $\ms{M}$ or $\ms{N}$ is extremal, then their joint observable is unique.
If both $\ms{M}$ and $\ms{N}$ are extremal, then their unique joint observable is extremal.

\item[(f)] Suppose that an observable $\ms{M}$ and a channel $\Ec$ are parts of a single instrument $\Gamma$, i.e., $\Gamma(X,\id)=\ms{M}(X)$ for all outcome sets $X\subseteq\Omega$ and
$\Gamma(\Omega,\cdot)=\Ec$ for the total set $\Omega$.
If $\ms{M}$ or $\Ec$ is extremal, then the instrument $\Gamma$ is unique.
If $\ms{M}$ and $\Ec$ are both extremal, then $\Gamma$ is extremal.

\end{itemize}

Our investigation is organized as follows.
In Section \ref{sec:prel} we fix the notation and recall some standard results.
In Section \ref{sec:ext} we derive a general criterion for extremality.
Our main result, sketched here in the introduction, is proved in Section \ref{sec:joint}. In Section \ref{sec:comp} we concentrate on quantum channels on compound systems and prove that causal channels are local if one of the channels reduced to one of the subsystems is extremal.
Finally, in Section \ref{sec:end} we summarize our conclusions and make some final remarks.

\section{Notation and preliminaries}\label{sec:prel}

We denote by $\mc A$ a von Neumann algebra with the unit $\id_{\mc A}$ and $\hil$ is a complex Hilbert space with the inner product $\sis{\cdot}{\cdot}:\hil\times\hil\to\mb C$. 
We denote the set of bounded linear operators on $\hil$ by $\lh$ and the unit of $\lh$ by $\id_{\hil}$. 
The subscripts of the unities may be omitted if there is no ambiguity of the algebra or Hilbert
space. 
Although we concentrate on von Neumann algebras, it should be noted that Theorem \ref{dilat} and Proposition \ref{RNd} of this section also hold in the more general
setting of $C^*$-algebras. We follow the convention $\nat=\{1,\,2,\,3,\ldots\}$.

We recall that a linear map $\Phi:\mc A\to\mc L(\hil)$ is {\it completely positive} (CP) if for any $n\in\mb N$, $a_1,\ldots,\,a_n\in\mc A$ and $\f_1,\ldots,\,\f_n\in\hil$ we have
\begin{equation}
\sum_{j,k=1}^n\sis{\f_j}{\Phi(a_j^*a_k)\f_k}\geq0.
\end{equation}
We denote the set of all CP maps $\Phi:\mc A\to\mc L(\hil)$ by ${\bf CP}(\mc A;\hil)$.
We recall the following fundamental result for CP maps \cite{Stinespring55}.

\begin{theorem}\label{dilat}
For any $\Phi\in{\bf CP}(\mc A;\hil)$ there is a triple $(\mc M,\pi,J)$ where $\mc M$ is a Hilbert space, $\pi:\mc A\to\mc L(\mc M)$ is a unital *-representation and $J:\hil\to\mc M$ is a linear map
such that
\begin{itemize}
\item[{\rm (i)}] $\Phi(a)=J^*\pi(a)J$ for all $a\in\mc A$ and
\item[{\rm (ii)}] the linear span of the set $\{\pi(a)J\f\,|\,a\in\mc A,\ \f\in\hil\}$ is dense in $\mc M$.
\end{itemize}
If $(\mc M',\pi',J')$ is another such triple, there is a unitary operator $U:\mc M\to\mc M'$ such that $U\pi(a)=\pi'(a)U$ for all $a\in\mc A$ and $UJ=J'$.
\end{theorem}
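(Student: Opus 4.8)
The plan is to carry out Stinespring's GNS-type construction. First I would form the algebraic tensor product $\mc A\otimes\hil$ and equip it with the sesquilinear form determined by
\[
\sis{a\otimes\f}{b\otimes\psi}:=\sis{\f}{\Phi(a^*b)\psi},
\]
extended by sesquilinearity to finite sums. The defining inequality of complete positivity is precisely the statement that this form is positive semidefinite: for $\xi=\sum_j a_j\otimes\f_j$ one has $\sis{\xi}{\xi}=\sum_{j,k}\sis{\f_j}{\Phi(a_j^*a_k)\f_k}\ge0$. Letting $N=\{\xi : \sis{\xi}{\xi}=0\}$, which is a subspace by the Cauchy--Schwarz inequality for semidefinite forms, the quotient $(\mc A\otimes\hil)/N$ inherits a genuine inner product, and I would take $\mc M$ to be its completion.

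Next I would define the representation and the map $J$ on the dense image of $\mc A\otimes\hil$ in $\mc M$. For $a\in\mc A$ set $\pi(a)[b\otimes\f]:=[(ab)\otimes\f]$ and $J\f:=[\id_{\mc A}\otimes\f]$, where $[\,\cdot\,]$ denotes the equivalence class modulo $N$. One checks directly that $\pi$ is multiplicative, unital, and $*$-preserving, and that property (ii) holds since $\pi(a)J\f=[a\otimes\f]$ ranges over a dense subspace by construction. Property (i) is a one-line computation: $\sis{\psi}{J^*\pi(a)J\f}=\sis{\id_{\mc A}\otimes\psi}{a\otimes\f}=\sis{\psi}{\Phi(a)\f}$. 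The delicate point---and the main obstacle---is to show that each $\pi(a)$ is well defined on the quotient and bounded, so that it extends to $\mc M$. Here I would again invoke complete positivity through the factorization $\no{a}^2\id_{\mc A}-a^*a=c^*c$ (using that positive elements of a $C^*$-algebra admit square roots); this yields
\[
\no{a}^2\sis{\xi}{\xi}-\no{\pi(a)\xi}^2=\sum_{j,k}\sis{\f_j}{\Phi\big((cb_j)^*(cb_k)\big)\f_k}\ge0,
\]
so that $\no{\pi(a)\xi}\le\no{a}\,\no{\xi}$, which simultaneously gives boundedness and, applied to $\xi\in N$, shows $\pi(a)N\subseteq N$.

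Finally, for uniqueness, given a second triple $(\mc M',\pi',J')$ obeying (i) and (ii), I would define $U$ on the dense set $\{\pi(a)J\f\}$ by $U\pi(a)J\f:=\pi'(a)J'\f$. The key identity
\[
\sis{\pi(a)J\f}{\pi(b)J\psi}=\sis{\f}{\Phi(a^*b)\psi}=\sis{\pi'(a)J'\f}{\pi'(b)J'\psi}
\]
shows that $U$ is well defined and isometric, and density of both ranges (property (ii) for each triple) promotes $U$ to a unitary. The intertwining relation $U\pi(a)=\pi'(a)U$ and the identity $UJ=J'$ then follow by evaluating on the generators. I expect the boundedness of $\pi(a)$ to be the only step requiring genuine care; once the inner-product space $\mc M$ is in place, everything else is routine verification on the dense span of the vectors $[a\otimes\f]$.
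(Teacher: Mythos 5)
Your proof is correct and is precisely the classical GNS-type construction of Stinespring; the paper itself does not prove Theorem \ref{dilat} but cites \cite{Stinespring55}, whose argument is essentially the one you give (positive semidefinite form on $\mc A\otimes\hil$, quotient by the null space, the $C^*$-identity $\no{a}^2\id_{\mc A}-a^*a=c^*c$ for boundedness of $\pi(a)$, and the isometric intertwiner on the dense span for uniqueness). The only omissions are trivial: boundedness of $J$ (immediate from $\no{J\f}^2=\sis{\f}{\Phi(\id_{\mc A})\f}$) and a harmless notational switch from $a_j$ to $b_j$ in the boundedness estimate.
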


\begin{definition}
{\rm
Suppose that $\Phi\in{\bf CP}(\mc A;\hil)$ is associated with a triple $(\mc M,\pi,J)$ satisfying the conditions (i)--(ii) of  Theorem \ref{dilat}. 
The triple $(\mc M,\pi,J)$ is called a {\it minimal Stinespring dilation for $\Phi$}. 
}
\end{definition}

If the Hilbert space $\hil$ and the algebra $\mc A$ are separable, the dilation space $\mc M$ of a minimal Stinespring dilation of a CP map $\Phi\in{\bf CP}(\mc A;\hil)$ can be chosen to be
separable \cite[Section 9.2]{QTOS76}. This is the case in the typical physical applications; see the end of this section.

Suppose that $\Phi,\,\Psi\in{\bf CP}(\mc A;\hil)$. We denote $\Psi\leq\Phi$ or $\Phi\geq\Psi$ when $\Phi-\Psi\in{\bf CP}(\mc A;\hil)$. This relation is a partial order in ${\bf CP}(\mc A;\hil)$.
The following result has been proved in \cite{Raginsky03}.
We provide a proof for the reader's convenience.

\begin{proposition}\label{RNd}
Suppose that $\Phi,\,\Psi\in{\bf CP}(\mc A;\hil)$ and $\Psi\leq\Phi$. Also assume that $(\mc M,\pi,J)$ is a minimal Stinespring dilation of $\Phi$. There is a unique operator $E\in\mc L(\mc M)$ such
that $[E,\pi(a)]=0$ for all $a\in\mc A$ and
\begin{equation}\label{deriv}
\Psi(a)=J^*\pi(a)EJ  \qquad \forall a\in\mc A \, .
\end{equation}
\end{proposition}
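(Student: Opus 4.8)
The plan is to produce $E$ as the Riesz representative of a sesquilinear form on the dilation space $\mc M$ that encodes $\Psi$ relative to the minimal Stinespring dilation of $\Phi$. Writing $\mc D=\span\{\pi(a)J\f\mid a\in\mc A,\ \f\in\hil\}$ for the dense subspace appearing in condition (ii) of Theorem~\ref{dilat}, I would define, for $\xi=\sum_j\pi(a_j)J\f_j$ and $\eta=\sum_k\pi(b_k)J\psi_k$ in $\mc D$,
\begin{equation*}
B(\xi,\eta)=\sum_{j,k}\sis{\f_j}{\Psi(a_j^*b_k)\psi_k}.
\end{equation*}
The decisive algebraic identity is that the analogous expression built from $\Phi$ reproduces the $\mc M$-inner product: since $\pi$ is a unital $*$-representation and $\Phi(a)=J^*\pi(a)J$, one has $\sis{\xi}{\eta}_{\mc M}=\sum_{j,k}\sis{\f_j}{\Phi(a_j^*b_k)\psi_k}$. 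Because both $\Psi$ and $\Phi-\Psi$ are CP, this yields the two-sided bound $0\le B(\xi,\xi)\le\no{\xi}^2$.

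The main obstacle is well-definedness: an element of $\mc D$ has many representations as a finite sum, and $B$ must depend only on $\xi,\eta$ as vectors of $\mc M$. I would settle this using that $B$ is a positive semidefinite form (again by complete positivity of $\Psi$), so the Cauchy--Schwarz inequality $|B(\xi,\eta)|\le B(\xi,\xi)^{1/2}B(\eta,\eta)^{1/2}\le\no{\xi}\,\no{\eta}$ holds. If some representation of $\xi$ sums to the zero vector of $\mc M$, then $B(\xi,\xi)=0$, whence $B(\xi,\eta)=0$ for every $\eta$; applying this to differences of representations shows that $B$ is unambiguous, and the same inequality shows it is bounded by $1$. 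The bounded form then extends to all of $\mc M$, and the Riesz representation theorem implements it by a unique bounded operator $E$ with $B(\xi,\eta)=\sis{\xi}{E\eta}$; positivity and the bound give $0\le E\le\id$.

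It remains to verify the two asserted properties. For the representation \eqref{deriv}, I would compute, for $\f,\psi\in\hil$, that $\sis{\f}{J^*\pi(a)EJ\psi}=\sis{\pi(a^*)J\f}{EJ\psi}=B(\pi(a^*)J\f,\,\pi(\id)J\psi)=\sis{\f}{\Psi(a)\psi}$, using unitality to write $J\psi=\pi(\id)J\psi\in\mc D$; as $\f,\psi$ are arbitrary this gives $\Psi(a)=J^*\pi(a)EJ$. For commutation, the key is the module identity $\pi(a)\pi(c)=\pi(ac)$ together with $\pi(a)^*=\pi(a^*)$: both $\sis{\pi(b)J\f}{E\pi(a)\pi(c)J\psi}$ and $\sis{\pi(b)J\f}{\pi(a)E\pi(c)J\psi}$ reduce through $B$ to $\sis{\f}{\Psi(b^*ac)\psi}$, so that $\sis{\xi}{(E\pi(a)-\pi(a)E)\eta}=0$ on the dense set $\mc D$, and hence $[E,\pi(a)]=0$.

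Finally, for uniqueness I would suppose $E'$ satisfies the same two conditions and set $F=E-E'$, so that $F$ commutes with every $\pi(a)$ and $J^*\pi(a)FJ=0$ for all $a$. Pulling $F$ through $\pi(c)$ by commutation gives $\sis{\pi(b)J\f}{F\pi(c)J\psi}=\sis{\f}{J^*\pi(b^*c)FJ\psi}=0$, and density of $\mc D$ forces $F=0$, i.e.\ $E=E'$. I expect the well-definedness step to be the only genuinely delicate point; the remaining verifications are direct consequences of the Stinespring relations and the unitality of $\pi$.
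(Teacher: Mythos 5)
Your proof is correct, but it takes a genuinely different route from the paper's. The paper constructs a \emph{second} minimal Stinespring dilation $(\mc M',\pi',J')$, this time of $\Psi$, and uses the bound $\Psi\leq\Phi$ to show that the map $\pi(a)J\f\mapsto\pi'(a)J'\f$ extends to a bounded intertwiner $D:\mc M\to\mc M'$ (so $D\pi(a)=\pi'(a)D$); the desired operator is then $E=D^*D$, and both the commutation property and \eqref{deriv} fall out of the intertwining relation. You instead work entirely inside the single dilation space $\mc M$: you encode $\Psi$ as a sesquilinear form $B$ on the dense span, use complete positivity of both $\Psi$ and $\Phi-\Psi$ to get $0\le B(\xi,\xi)\le\no{\xi}^2$, dispose of the well-definedness issue via Cauchy--Schwarz for the positive semidefinite form (which is indeed the one delicate point, and your argument for it is sound), and obtain $E$ by Riesz representation. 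The paper's route buys conceptual alignment with the Radon--Nikodym picture of Arveson and Raginsky and gets positivity of $E$ for free from $E=D^*D$; your route is more self-contained (no second dilation needed) and makes the bounds $0\le E\le\id_{\mc M}$ explicit, which is in fact slightly more information than the statement asks for. Your uniqueness argument is essentially the same as the paper's: both reduce to showing that an operator commuting with $\pi$ and annihilated by compression with $J^*\pi(\cdot)J$-type expressions vanishes on the dense span guaranteed by minimality.
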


\begin{proof}
Suppose that $(\mc M',\pi',J')$ is a minimal Stinespring dilation of $\Psi$. Pick $n\in\mb N$, $a_1,\ldots,\,a_n\in\mc A$ and $\f_1,\ldots,\,\f_n\in\hil$ and set
\begin{equation}\label{eq:eta}
\eta=\sum_{j=1}^n\pi(a_j)J\f_j\in\mc M,\qquad\zeta=\sum_{j=1}^n\pi'(a_j)J'\f_j\in\mc M'.
\end{equation}
Using $\Psi\leq\Phi$ we may evaluate
\begin{eqnarray*}
\|\zeta\|^2&=&\sum_{j,k=1}^n\sis{J'\f_j}{\pi'(a_j^*a_k)J'\f_k}=\sum_{j,k=1}^n\sis{\f_j}{\Psi(a_j^*a_k)\f_k}\leq\sum_{j,k=1}^n\sis{\f_j}{\Phi(a_j^*a_k)\f_k}\\
&=&\sum_{j,k=1}^n\sis{J\f_j}{\pi(a_j^*a_k)J\f_k}=\|\eta\|^2.
\end{eqnarray*}
Using property (ii) of $(\mc M,\pi,J)$ in Theorem \ref{dilat} we define an operator $D:\mc M\to\mc M'$ such that $D\pi(a)J\f=\pi'(a)J'\f$ for all $a\in\mc A$ and $\f\in\hil$; by the above calculation $D$ is well-defined and bounded.
It follows that 
\begin{equation}
\Psi(a)=(J')^*\pi'(a)J'=J^*D^*D\pi(a)J
\end{equation}
for all $a\in\mc A$. 
Moreover for all $a,\,b\in\mc A$ and $\f\in\hil$ one has
\begin{equation*}
D\pi(a)\pi(b)J\f=D\pi(ab)J\f=\pi'(ab)J'\f=\pi'(a)\pi'(b)J'\f=\pi'(a)D\pi(b)J\f \, .
\end{equation*}
The minimality of $(\mc M,\pi,J)$ yields thus $D\pi(a)=\pi'(a)D$ for all $a\in\mc A$. 
Hence also $\pi(a)D^*=D^*\pi'(a)$ for all $a\in\mc A$. 
Define $E:=D^*D$ and, using the results above, we see that $E$ is as in the claim.

Suppose now that $E_1,\,E_2\in\mc L(\mc M)$ are as in the claim. 
Suppose that $\eta\in\mc M$ is the vector defined in \eqref{eq:eta}. 
We obtain
\begin{eqnarray*}
\sis{\eta}{(E_1-E_2)\eta}&=&\sum_{j,k=1}^n\sis{J\f_j}{\pi(a_j^*)(E_1-E_2)\pi(a_k)J\f_k}\\
&=&\sum_{j,k=1}^n\sis{J\f_j}{\pi(a_j^*a_k)(E_1-E_2)J\f_k}\\
&=&\sum_{j,k=1}^n\big(\sis{\f_j}{\Psi(a_j^*a_k)\f_k}-\sis{\f_j}{\Psi(a_j^*a_k)\f_k}\big)=0 \, .
\end{eqnarray*}
Again, from the minimality of $(\mc M,\pi,J)$ follows that $E_1=E_2$.
\end{proof}

Any $\Phi\in{\bf CP}(\mc A;\hil)$ is guaranteed to be norm continuous. If $\Phi$ is additionally ultraweakly continuous, i.e., is continuous with respect to the ultraweak topologies of
$\mc A$ and $\mc L(\hil)$, we say that $\Phi$ is {\it normal}. We denote the set of normal CP maps $\Phi\in{\bf CP}(\mc A;\hil)$ by ${\bf NCP}(\mc A;\hil)$. Whenever $\Phi\in{\bf NCP}(\mc A;
\hil)$, there is a unique predual map $\Phi_*:\mc T(\hil)\to\mc A_*$, where $\mc T(\hil)$ is the set of trace class operators on $\hil$ (the predual of $\mc L(\hil)$) and $\mc A_*$ is the predual of
$\mc A$, which is continuous with respect to the trace norm topology of $\mc T(\hil)$ and the norm topology of $\mc A_*$. The maps $\Phi$ and $\Phi_*$ are related by
\begin{equation}
\tr{T\Phi(a)}=\la\Phi_*(T),a\ra,\qquad T\in\mc T(\hil) \, , \quad a\in\mc A \, , 
\end{equation}
where $\la\cdot,\cdot\ra:\mc A_*\times\mc A\to\mb C$ is the canonical bilinear form (dual pairing).
A typical physically relevant CP map $\Phi$ is normal and hence has the predual $\Phi_*$.
These two maps $\Phi$ and $\Phi_*$ are considered as different representations of the same object, and $\Phi$ is said to be in the {\it Heisenberg picture} while $\Phi_*$ in the {\it Schr\"odinger picture}.

Since $\mc A$ is a von Neumann algebra, Theorem \ref{dilat} can be augmented with the following additional result \cite[Section 9.2]{QTOS76}: 
{\it For any $\Phi\in{\bf NCP}(\mc A;\hil)$, there is a minimal Stinespring dilation $(\mc M,\pi,J)$ where the *-representation $\pi$ is normal.}

\begin{definition}
{\rm
For each positive operator $P\in\mc L(\hil)$, we denote by ${\bf CP}_P(\mc A;\hil)$ the set of maps $\Phi\in{\bf CP}(\mc A;\hil)$ satisfying $\Phi(\id_{\mc A})=P$. 
We further denote ${\bf NCP}_P(\mc A;\hil)={\bf CP}_P(\mc A;\hil) \cap {\bf NCP}(\mc A;\hil)$.
}
\end{definition}

The set ${\bf NCP}_P(\mc A;\hil)$ has different functions in quantum mechanics depending on the choices for the algebra $\mc A$ and the operator $P$. 
In the following we recall the most relevant ones. The Hilbert spaces (input and output spaces; typically denoted by $\hil$ or $\mc K$) appearing in the definitions of the sets of quantum
mechanical CP maps are always assumed to be separable. 
In most of the cases the dilation spaces of these CP maps are also automatically separable.

\begin{itemize}

\item \emph{States}:
We say that an operator $\varrho\in\mc L(\hil)$ is a {\it state} if it is a positive trace class operator and $\tr{\varrho}=1$. 
We denote the set of states on $\hil$ by $\mc S(\hil)$. 
From the duality $\lh_\ast=\trh$ follows that $\mc S(\hil)={\bf NCP}_1(\mc L(\hil);\mb C)$.

\item \emph{Channels}: 
Suppose that $\mc K$ is a separable Hilbert space. 
An element in the set ${\bf NCP}_{\id}(\mc L(\mc K);\hil) \equiv \mc C(\mc K,\hil)$ is called a \emph{channel}.
A channel corresponds to a procedure where a system described by the Hilbert space $\hil$ transforms
into a system described by the space $\mc K$. 
The dilation space appearing in a minimal Stinespring dilation of a channel is separable since $\hi$ and $\hik$ are separable.

\item \emph{Observables}:
Suppose that $(\Om,\Sigma)$ is a measurable space, i.e., $\Om\neq\emptyset$ and $\Sigma$ is a $\sigma$-algebra of subsets of $\Om$. Also assume that $\nu:\Sigma\to[0,\infty]$ is a
$\sigma$-finite measure. 
An element in the set ${\bf NCP}_{\id}(L^\infty(\nu);\hil)\equiv\mc O_\nu(\hil)$ is called an \emph{observable}.
Observables that are *-homomorphisms are called sharp observables \cite{OQP97}. 
Usually observables are identified with normalized positive operator valued measures (POVMs) in the following way: for any $M\in\mc O_\nu(\hil)$ and $X\in\Sigma$, we denote $\ms M(X)=M(\chi_X)$, where $\chi_X$ is the characteristic function of $X$. Now $\ms M$ is a POVM and $M(f)=\int f\,d\ms M$ for all
$f\in L^\infty(\nu)$. 
Conversely, every POVM is of the above form for some measure $\nu$ and $M\in \mc O_\nu(\hil)$ \cite{Part1}. From now on, we view observables as POVMs and we denote POVMs defined on $\Sigma$ by $\mc O_\Sigma(\hil)$. In this view, sharp observables are projection-valued measures. The minimal
dilation of a POVM is often called a minimal {\it Na\u{\i}mark dilation}. The separability or non-separability of the dilation space of a minimal Stinespring dilation of a POVM depends on the properties of the
value space $(\Om,\Sigma)$. 
For instance, if the $\sigma$-algebra $\Sigma$ is numerably generated, then the algebra $L^\infty(\nu)$ with any $\sigma$-finite measure $\nu:\Sigma\to[0,\infty]$ is separable and hence the
dilation space in a minimal Stinespring dilation is guaranteed to be separable.

\item \emph{Instruments}:
Suppose that $(\Om,\Sigma)$ is a measurable space and $\nu:\Sigma\to[0,\infty]$ is a
$\sigma$-finite measure. 
An element in the set ${\bf NCP}_{\id}(L^\infty(\nu)\otimes\mc L(\mc K);\hil)\equiv\mc I_\nu(\mc K;\hil)$ is called an \emph{instrument} \cite{QTOS76}.
An instrument describes the conditional state changes
experienced by a quantum system in a measurement with the value space $(\Om,\Sigma,\nu)$. Instruments $G\in\mc I_\nu(\mc K;\hil)$ are usually treated as maps $\Gamma:\Sigma\times
\mc L(\mc K)\to\mc L(\hil)$ such that $\Gamma(X,B)=G(\chi_X\otimes B)$ for all $X\in\Sigma$ and $B\in\mc L(\mc K)$. As in the case of observables, instruments can be defined in a broader
context as maps $\Gamma:\Sigma\times\mc L(\mc K)\to\mc L(\hil)$, where $\Gamma(\cdot,B)$ is an operator measure for all $B\in\mc L(\mc K)$, $\Gamma(\cdot,\id)\in\mc O_\Sigma(\hil)$,
$\Gamma(X,\cdot)$ is a normal linear CP map and $\Gamma(\Om,\cdot)\in\mc C(\mc K;\hil)$. 
We denote the set of such maps by $\mc I_\Sigma(\mc K;\hil)$. The dilation space of a minimal
Stinespring dilation of an instrument $\Gamma$ is separable if and only if the dilation space of a minimal Na\u{\i}mark dilation of the associated POVM $\Gamma(\cdot,\id)$ is separable, hence the
separability of the dilation space depends on the properties of the value space $(\Om,\Sigma)$.

\end{itemize}

\section{Extremal points}\label{sec:ext}

The set ${\bf CP}_P(\mc{A};\hil)$ is convex; if $\Phi_1,\Phi_2\in {\bf CP}_P(\mc{A};\hil)$, then $t\Phi_1 + (1-t) \Phi_2\in{\bf CP}_P(\mc{A};\hil)$ for all $0\leq t \leq 1$.
An element $\Phi\in{\bf CP}_P(\mc{A};\hil)$ is called  \emph{extremal} (or \emph{extreme} or \emph{pure}) in ${\bf CP}_P(\mc{A};\hil)$ if it cannot be written as a mixture $\Phi=t\Phi_1+(1-t)\Phi_2$ for two different elements
$\Phi_1,\Phi_2 \in{\bf CP}_P(\mc{A};\hil)$ and $0<t<1$. 
It is obvious that when deciding whether $\Phi$ is extremal or not, it suffices to study mixtures with $t=1/2$. 
The following result characterizes the
extremal elements in ${\bf CP}_P(\mc{A};\hil)$ \cite{Arveson69}. We provide a proof for the reader's convenience.

\begin{theorem}\label{ext}
Suppose that $\Phi\in{\bf CP}_P(\mc{A};\hil)$ and let $(\mc M,\pi,J)$ be its minimal Stinespring dilation. The map $\Phi$ is extremal in ${\bf CP}_P(\mc{A};\hil)$ if and only if for any $E\in\mc L(\mc M)$ the
conditions $[E,\pi(a)]=0$ for all $a\in\mc A$ and $J^*EJ=0$ imply $E=0$.
\end{theorem}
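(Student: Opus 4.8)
The plan is to use Proposition \ref{RNd} as the engine: it sets up a Radon--Nikodym correspondence between operators in the commutant of $\pi(\mc A)$ and completely positive maps dominated by $\Phi$, and this is exactly what converts the convex-geometric statement (extremality) into the stated operator condition. Throughout I work with the fixed minimal dilation $(\mc M,\pi,J)$, so $\Phi(a)=J^*\pi(a)J$ and in particular $J^*J=\Phi(\id_{\mc A})=P$.

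For the \emph{only if} direction I would argue by contraposition: assume there is a nonzero $E\in\mc L(\mc M)$ with $[E,\pi(a)]=0$ for all $a$ and $J^*EJ=0$, and produce a nontrivial decomposition of $\Phi$. First I reduce to a self-adjoint candidate: taking adjoints in $E\pi(a)=\pi(a)E$ shows that $E^*$ also commutes with $\pi(\mc A)$ and satisfies $J^*E^*J=0$, hence so do the self-adjoint operators $\tfrac12(E+E^*)$ and $\tfrac{1}{2i}(E-E^*)$, at least one of which is nonzero; rescaling, I may assume $E=E^*$ and $\no{E}\le 1$. Then $\id\pm E\ge 0$, and since $\pi(a)$ commutes with the self-adjoint $E$ it commutes with every bounded Borel function of $E$, in particular with the positive square roots $C_\pm=(\id\pm E)^{1/2}$. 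Setting $\Phi_\pm(a)=(C_\pm J)^*\pi(a)(C_\pm J)=J^*\pi(a)(\id\pm E)J$ gives two completely positive maps (Stinespring form with $J$ replaced by $C_\pm J$) with $\Phi_\pm(\id_{\mc A})=J^*J\pm J^*EJ=P$, so $\Phi_\pm\in{\bf CP}_P(\mc A;\hil)$, and clearly $\Phi=\tfrac12(\Phi_++\Phi_-)$.

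The step I expect to be the crux is showing the decomposition is genuinely nontrivial, i.e.\ $\Phi_+\neq\Phi_-$, equivalently $J^*\pi(a)EJ\neq 0$ for some $a$. This is exactly where minimality (property (ii) of Theorem \ref{dilat}) must be used. For $\eta=\pi(b)J\f$ and $\zeta=\pi(c)J\psi$ one computes, using $E\pi(c)=\pi(c)E$, that $\sis{\eta}{E\zeta}=\sis{\f}{J^*\pi(b^*c)EJ\psi}$. If $J^*\pi(a)EJ$ vanished for all $a$, then $E$ would vanish on the (dense) linear span of all such $\zeta$ tested against all such $\eta$, forcing $E=0$, contrary to assumption. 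Hence $\Phi_+\neq\Phi_-$ and $\Phi$ is not extremal, proving the \emph{only if} direction.

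For the converse I argue the contrapositive as well: suppose $\Phi=\tfrac12(\Phi_1+\Phi_2)$ with $\Phi_1,\Phi_2\in{\bf CP}_P(\mc A;\hil)$ and $\Phi_1\neq\Phi_2$. Then $\Psi:=\tfrac12\Phi_1\le\Phi$, so Proposition \ref{RNd} yields a unique $E_1\in\mc L(\mc M)$ commuting with $\pi(\mc A)$ with $\tfrac12\Phi_1(a)=J^*\pi(a)E_1J$. Put $E:=2E_1-\id$; it commutes with $\pi(\mc A)$, and evaluating at $a=\id_{\mc A}$ gives $J^*EJ=\Phi_1(\id_{\mc A})-J^*J=P-P=0$. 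If $E=0$ then $E_1=\tfrac12\id$ forces $\Phi_1=\Phi$ and hence $\Phi_2=\Phi$, contradicting $\Phi_1\neq\Phi_2$; thus $E$ is a nonzero operator satisfying both conditions. Therefore the operator condition fails precisely when $\Phi$ is non-extremal, which establishes the equivalence.
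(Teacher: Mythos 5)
Your proof is correct and takes essentially the same approach as the paper's: both directions use Proposition \ref{RNd} to extract a nonzero commutant operator with $J^*EJ=0$ from a nontrivial convex decomposition, and both prove the converse by forming the perturbed maps $\Phi_\pm(a)=J^*\pi(a)(\id_{\mc M}\pm E)J$ from a self-adjoint contraction $E$, with minimality of the dilation guaranteeing $\Phi_+\neq\Phi_-$. The only differences are cosmetic: you symmetrize $E$ via its real and imaginary parts where the paper uses $i(E^*-E)$, and you apply Proposition \ref{RNd} once to $\tfrac12\Phi_1$ (setting $E=2E_1-\id_{\mc M}$) where the paper applies it to both halves of the decomposition.
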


\begin{proof}
Suppose that $\Phi\in{\bf CP}_P(\mc{A};\hil)$ is not extremal, i.e., there are $\Phi_\pm\in{\bf CP}_P(\mc{A};\hil)$ such that $\Phi_+\neq\Phi_-$ and $\Phi=\frac12\Phi_++\frac12\Phi_-$. Hence
$\Phi_\pm\leq2\Phi$ and according to Proposition \ref{RNd} there are operators $F_\pm\in\mc L(\mc M)$ that commute with $\pi$ and 
\begin{equation}
\Phi_\pm(a)=2J^*\pi(a)F_\pm J=J^*\pi(a)E_\pm J \qquad \forall a\in\mc{A} \, , 
\end{equation}
where $E_\pm=2F_\pm$. 
Define $E=E_+-E_-$. 
The operator $E$ is nonzero since $\Phi_+\neq\Phi_-$. 
Moreover, $E$ commutes with $\pi$ and
\begin{equation}
J^*EJ=J^*E_+J-J^*E_-J=\Phi_+(1)-\Phi_-(1)=P-P=0 \, .
\end{equation}

Suppose then that $E\neq0$ satisfies $[E,\pi(a)]=0$ for all $a\in\mc{A}$ and $J^*EJ=0$. 
We may assume that $E$ is selfadjoint; if it is not selfadjoint we may redefine $E'=i(E^*-E)$. 
We may also assume that $\|E\|\leq1$; otherwise we can redefine $E''=\|E\|^{-1}E$. 
Thus we are free to assume that $-\id_{\mc{M}}\leq E\leq \id_{\mc{M}}$ and we can define the positive operators
$E_\pm=\id_{\mc{M}}\pm E$ that commute with the representation $\pi$ and the CP maps $\Phi_\pm\in{\bf CP}(\mc{A};\hil)$ through
\begin{equation}
\Phi_\pm(a)=J^*\pi(a)(\id_{\mc{M}}\pm E)J=J^*\sqrt{\id_{\mc{M}}\pm E}\pi(a)\sqrt{\id_{\mc{M}}\pm E}J \qquad \forall a\in\mc{A} \, .
\end{equation}
Using the properties of $E$ we obtain
\begin{equation}
\Phi_\pm(1)=J^*(\id_{\mc{M}}\pm E)J=J^*J\pm J^*EJ=J^*J=\Phi(1)=P \, .
\end{equation}
Thus $\Phi_\pm\in{\bf CP}_P(\mc{A};\hil)$. 
Since $E\neq0$ and the dilation $(\mc M,\pi,J)$ is minimal, it follows that for some $\f,\,\psi\in\hil$ and $a,\,b\in\mc{A}$ we have
\begin{equation}
0\neq\sis{\pi(a)J\f}{E\pi(b)J\psi}=\frac12\Sis{\f}{\big(\Phi_+(a^*b)-\Phi_-(a^*b)\big)\psi} \, .
\end{equation}
Thus $\Phi_+\neq\Phi_-$ and $\Phi$ has a nontrivial convex decomposition $\Phi=\frac12\Phi_+
+\frac12\Phi_-$. 
Therefore, $\Phi$ is not extremal in ${\bf CP}_P(\mc{A};\hil)$.
\end{proof}

Theorem \ref{ext} can also be applied to the convex set ${\bf NCP}_P(\mc{A};\hil)$.
Namely, {\it a map $\Phi\in{\bf NCP}_P(\mc{A};\hil)$ associated with a minimal Stinespring dilation $(\mc{M},\pi,J)$ is
extremal in ${\bf NCP}_P(\mc{A};\hil)$ if and only if for an operator $E\in\mc L(\mc M)$ the conditions $[E,\pi(a)]=0$ for all $a\in\mc{A}$ and $J^*EJ=0$ yield $E=0$.} This is due to the fact that when
$\Phi\in{\bf NCP}_P(\mc{A};\hil)$, then any map $\Psi\in{\bf CP}_P(\mc{A};\hil)$ appearing in a convex decomposition of $\Phi$ is necessarily normal.

Theorem \ref{ext} gives extremality conditions for the usual physically relevant CP maps. 
We recall the following characterizations.

\begin{itemize}

\item \emph{States}: Since the set of states can be associated with ${\bf NCP}_1(\mc L(\hil);\mb C)$ the extremal points are
characterized by Theorem \ref{ext}; {\it a state $\varrho$ is extremal in $\mc S(\hil)$ if and only if $\varrho$ is a one-dimensional projection.}
This is, of course, a classical and well-known result. 
An extremal state is usually called {\it pure}.

\item \emph{Observables}: Let $T$ be a trace class operator satisfying $\sis{\f}{T\f}>0$ for all $\f\in\hil\setminus\{0\}$. Suppose that $\ms M\in\mc O_\Sigma(\hil)$ and define a finite measure $\nu_T^{\ms M}=\tr{T\ms M(\cdot)}$. One sees that the complex measures $X\mapsto
\sis{\f}{\ms M(X)\psi}$, $\f,\,\psi\in\hil$, are absolutely continuous with respect to $\nu_T^{\ms M}$, or briefly $\ms M\ll \nu_T^{\ms M}$. Hence, $\ms M\in\mc O_{\nu_T^{\ms M}}(\hil)$.
If $\ms M\in\mc O_\Sigma(\hil)$ and $\ms M\ll \nu$, where $\nu:\Sigma\to[0,\infty]$ is a $\sigma$-finite measure, it follows that whenever $\ms N\in\mc O_\Sigma(\hil)$ appears in a convex
decomposition of $\ms M$, then $\ms N\ll \nu$. Thus, we may apply Theorem \ref{ext} to the larger set $\mc O_\Sigma(\hil)$: {\it An observable
$\ms M\in\mc O_\Sigma(\hil)$ with a minimal Na\u{\i}mark dilation $(\mc M,\ms P,J)$, where $\mc M$ is a Hilbert space, $\ms P\in\mc O_\Sigma(\mc M)$ a projection-valued measure and $J:\hil\to
\mc M$ is an isometry so that $\ms M(X)=J^*\ms P(X)J$, is extremal in $\mc O_\Sigma(\hil)$ if and only if for an operator $E\in\mc L(\mc M)$ the conditions $[E,\ms P(X)]=0$ for all $X\in\Sigma$ and
$J^*EJ=0$ yield $E=0$.}
This result has been also proved in \cite{Pellonpaa11}.

\item \emph{Channels}: Let us consider a channel $\mc E\in\mc C(\mc K;\hil)$. 
Assume that $(\mc M,\pi,J)$ is a minimal Stinespring dilation of $\mc E$. The *-representation $\pi$ of $\mc L(\mc K)$ is normal and hence it is unitarily
equivalent with a direct sum of the identity map. This means that we may choose $\mc M=\mc K\otimes\mc M_0$, where $\mc M_0$ is a separable Hilbert space (often called as an {\it ancilla})
and $\pi(B)=B\otimes \id_{\mc M_0}$ for all $B\in\mc L(\mc K)$. We often identify the triple $(\mc M,\pi,J)$ with the pair $(\mc M_0,J)$. The set of those operators in $\mc L(\mc M)$ that
commute with the representation $\pi$ are of the form $\id_{\mc K}\otimes D$ with $D\in\mc L(\mc M_0)$. We may characterize the set of extremal channels in the following way: {\it A channel
$\mc E\in\mc C(\mc K;\hil)$ with the minimal dilation $(\mc M_0,J)$ is extremal in $\mc C(\mc K;\hil)$ if and only if for any $D\in\mc L(\mc M_0)$ the condition $J^*(\id_{\mc K}\otimes D)J=0$
implies $D=0$.}

We may characterize the extremality of a channel $\mc E\in\mc C(\mc K;\hil)$ also in another way, first proved in \cite{Choi75} in the case of finite Hilbert spaces.
Suppose that $(\mc M_0,J)$ is a minimal dilation of $\mc E$ as above.
We fix an orthonormal basis $\{\xi_j\}_{j=1}^{\dim{(\mc M_0)}}$ and define the operators $R_j:\mc K\otimes\mc M_0\to\mc K$, such that $R_j(\f\otimes\psi)=\sis{\xi_j}{\psi}\f$ for all $\f\in\mc K$ and
$\psi\in\mc M_0$. 
For all $B\in\mc L(\mc K)$, we may write
\begin{eqnarray*}
\mc E(B)&=&J^*(B\otimes \id_{\mc M_0})J=\sum_{j=1}^{\dim{(\mc M_0)}}J^*(B\otimes\ktb{\xi_j}{\xi_j})J\\
&=&\sum_{j=1}^{\dim{(\mc M_0)}}J^*R_j^*BR_jJ=\sum_{j=1}^{\dim{(\mc M_0)}}K_j^*BK_j,
\end{eqnarray*}
where we have defined $K_j=R_jJ$. The operators $K_j$ are said to set up a {\it Kraus decomposition} for $\mc E$, and when the operators are obtained as above from a minimal dilation, the
decomposition or the set $\{K_j\}_j$ is said to be {\it minimal}. Suppose that $D\in\mc L(\mc M_0)$ and define $d_{j,k}=\sis{\xi_j}{D\xi_k}$. Inserting the identity operator $\sum_j\ktb{\xi_j}{\xi_j}$
on both sides of $D$, one finds that 
\begin{equation}
J^*(\id_{\mc K}\otimes D)J=\sum_{j,k}d_{j,k}K_j^*K_k \, .
\end{equation}
A set $\{B_{j,k}\}_{j,k}$ of bounded operators on $\mc K$ is said to be {\it strongly independent}
if $\sum_{j,k}\alpha_{j,k}B_{j,k}=0$ implies $\alpha_{j,k}=0$ for all $j,\,k$ whenever $(\alpha_{j,k})_{j,k=1}^{\dim{(\mc K)}}$ is a matrix representation of a bounded operator on $\mc K$. We may
rephrase the extremality condition of channels: {\it A channel $\mc E\in\mc C(\mc K;\hil)$ with a minimal set $\{K_j\}_j$ of Kraus operators is extremal in $\mc C(\mc K;\hil)$ if and only if the
set $\{K_j^*K_k\}_{j,k}$ is strongly independent.} 
This result can be found in \cite{Tsui96}.

\item \emph{Instruments}:
A complete characterization of extremal instruments is presented in \cite{Part1}.
It can be stated in a similar form as for channels by using minimal pointwise Kraus decompositions of instruments.
Actually, all the above characterizations are special cases of this general characterization due to the fact that states, observables, and channels can be viewed as instruments \cite{Part1}.

\end{itemize}

\section{Marginal maps and joint maps}\label{sec:joint}

In this section $\mc A$ and $\mc B$ are von Neumann algebras so that we may define their von Neumann tensor product $\mc A\otimes\mc B$. 
We also assume that $P\in\mc L(\hil)$ is fixed. 
We study maps belonging to ${\bf CP}_P(\mc{A};\hil)$, ${\bf CP}_P(\mc{B};\hil)$ or ${\bf CP}_P(\mc{A}\otimes\mc{B};\hil)$. 
We give our main result (Theorem \ref{extmarg}) without assuming normality. 
According to the discussion in the end of the previous section, all results apply also in the case of normal maps.

\begin{definition}
{\rm
Suppose that $\Psi\in{\bf CP}_P(\mc A\otimes\mc B;\hil)$. 
The maps 
\begin{equation}
\Psi_{(1)}:\mc A\to\mc L(\hil) \, , \quad \Psi_{(1)}(a)=\Psi(a\otimes \id_{\mc B}) \quad \textrm{for all $a\in\mc A$}
\end{equation}
and
\begin{equation}
\Psi_{(2)}:\mc B\to\mc L(\hil) \, , \quad \Psi_{(2)}(b)=\Psi(\id_{\mc A}\otimes b) \quad \textrm{for all $b\in\mc B$} 
\end{equation}
belong to ${\bf CP}_P(\mc A;\hil)$ and ${\bf CP}_P(\mc B;\hil)$, respectively. 
We call the maps $\Psi_{(1)}$ and $\Psi_{(2)}$ as the {\it marginals of $\Psi$}. Suppose that $\Phi_1\in{\bf CP}_P(\mc A;\hil)$ and $\Phi_2\in{\bf CP}_P(\mc B;\hil)$. If $\Phi_1$ and $\Phi_2$
are marginals of some $\Psi\in{\bf CP}_P(\mc A\otimes\mc B;\hil)$, i.e., $\Phi_1=\Psi_{(1)}$ and $\Phi_2=\Psi_{(2)}$, we say that $\Phi_1$ and $\Phi_2$ are {\it compatible}. Moreover, we say that $\Psi$ is a {\it joint map} for $\Phi_1$ and $\Phi_2$.
}
\end{definition}

To demonstrate the content of this definition, let us recall two common instances of compatibility.
Suppose that $\mc K$ is a Hilbert space, $(\Om,\Sigma)$ is a measurable space, and $\nu:\Sigma\to[0,\infty]$ is a $\sigma$-finite measure.
We set $\mc A=L^\infty(\nu)$, $\mc B=\mc L(\mc K)$ and $P=\id_\hil$.
If an observable $\ms M\in\mc O_\nu(\hil)$ and a channel $\mc E\in\mc C(\mc K;\hil)$ are compatible, we say that their joint
map $\Gamma\in\mc I_\nu(\mc K;\hil)$ is their {\it joint instrument}. 
Clearly, every instrument is a joint instrument of some observable and channel. 
For instance, let $\Om=\nat$ and $\Gamma(X,B)=\sum_{j\in X} K_j^\ast B K_j$ for some set of operators $\{K_j\}_j$.
Then the marginals of $\Gamma$ are $\ms M(X)=\sum_{j\in X} K_j^\ast K_j$ and $\mc E(B)=\sum_{j\in\Om} K_j^\ast B K_j$.

For another example, we set $\mc A=L^\infty(\nu)$ and $\mc B=L^\infty(\nu')$, where 
 $(\Om',\Sigma')$ is a measurable space and $\nu':\Sigma'\to[0,\infty]$ is a $\sigma$-finite measure.
If two observables $\ms M_1\in\mc O_\nu(\hil)$ and $\ms M_2\in
\mc O_{\nu'}(\hil)$ are compatible, we say that their joint map $\ms M\in\mc O_{\nu\times\nu'}(\hil)$ is their {\it joint observable}. 
Compatible observables are usually called {\it jointly measurable}.
According to the remarks made in the previous section, all the results concerning observables and instruments in this section also apply to the entire sets $\mc O_\Sigma(\hi)$,
$\mc O_{\Sigma'}(\hi)$ and $\mc I_\Sigma(\mc K;\hi)$. It should be noted that a joint observable $\ms M$ of $\ms M_1\in\mc O_\Sigma(\hil)$ and $\ms M_2\in\mc O_{\Sigma'}(\hil)$ is in $\mc O_{\Sigma
\otimes\Sigma'}(\hil)$, where $\Sigma\otimes\Sigma'$ is the product $\sigma$-algebra of $\Sigma$ and $\Sigma'$.
The fact that $\ms M$ is a joint observable means that
$\ms M(X\times \Omega)=\ms M_1(X)$ and $\ms M(\Omega' \times Y)=\ms M_2(Y)$ for all  $X\in\Sigma$, $Y\in\Sigma'$.

The main result of our investigation is the following.

\begin{theorem}\label{extmarg}
Suppose that $\Phi_1\in{\bf CP}_P(\mc A;\hil)$ and $\Phi_2\in{\bf CP}_P(\mc B;\hil)$ are compatible.
\begin{itemize}
\item[{\rm (a)}] If $\Phi_1$ is extremal in ${\bf CP}_P(\mc A;\hil)$ or $\Phi_2$ is extremal in ${\bf CP}_P(\mc B;\hil)$, then they have a unique joint map.
\item[{\rm (b)}] If $\Phi_1$ is extremal in ${\bf CP}_P(\mc A;\hil)$ and $\Phi_2$ is extremal in ${\bf CP}_P(\mc B;\hil)$, then their unique joint map is extremal in ${\bf CP}_P(\mc A\otimes\mc B;\hil)$.
\item[{\rm (c)}] If $\Phi_1$ or $\Phi_2$ is a *-representation, then $\Phi_1$ and $\Phi_2$ commute and the unique joint map $\Psi\in{\bf CP}(\mc A\otimes\mc B;\hil,P)$ is of the form
\begin{equation}\label{tahtiesit}
\Psi(a\otimes b)=\Phi_1(a)\Phi_2(b),\qquad a\in\mc A,\quad b\in\mc B.
\end{equation}
\end{itemize}
\end{theorem}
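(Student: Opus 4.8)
The plan is to prove (a) first and then derive (b) and (c) from it. The engine is the Radon--Nikodym-type Proposition \ref{RNd}, applied through a fixed minimal Stinespring dilation $(\mc M_1,\pi_1,J_1)$ of $\Phi_1$, which is common to every joint map since all joint maps share the marginal $\Phi_1$. For a positive $b\in\mc B$ with $b\leq\id_{\mc B}$, I would first note that $a\mapsto\Psi(a\otimes b)$ belongs to ${\bf CP}(\mc A;\hil)$ and is dominated by $\Phi_1$: writing $b=d^*d$ gives $a_j^*a_k\otimes b=(a_j\otimes d)^*(a_k\otimes d)$, so complete positivity of $\Psi$ passes to $a\mapsto\Psi(a\otimes b)$, and the same applied to $\id_{\mc B}-b\geq0$ yields $\Psi(\cdot\otimes b)\leq\Psi(\cdot\otimes\id_{\mc B})=\Phi_1$. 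Proposition \ref{RNd} then delivers a unique $E_b\in\mc L(\mc M_1)$ commuting with every $\pi_1(a)$ and satisfying $\Psi(a\otimes b)=J_1^*\pi_1(a)E_bJ_1$. Extending $b\mapsto E_b$ linearly over $\mc B$ produces a map $\Lambda$ into the commutant of $\pi_1(\mc A)$ with $\Psi(a\otimes b)=J_1^*\pi_1(a)\Lambda(b)J_1$, and, putting $a=\id_{\mc A}$, with the marginal identity $J_1^*\Lambda(b)J_1=\Psi(\id_{\mc A}\otimes b)=\Phi_2(b)$.

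Statement (a) is now immediate. If $\Phi_1$ is extremal, Theorem \ref{ext} says precisely that $E\mapsto J_1^*EJ_1$ is injective on the commutant of $\pi_1(\mc A)$. Hence, for any two joint maps, the associated commutant operators both lie in the commutant and both compress to $\Phi_2(b)$, so they coincide; the two joint maps therefore agree on all product elements $a\otimes b$ (that they agree everywhere is the point addressed in the last paragraph). Exchanging the roles of $\mc A$ and $\mc B$ covers the case where $\Phi_2$ is extremal.

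Part (b) follows from (a) with no further dilation theory. If $\Psi=\tfrac12\Psi_++\tfrac12\Psi_-$ with $\Psi_\pm\in{\bf CP}_P(\mc A\otimes\mc B;\hil)$, then taking marginals gives $\Phi_1=\tfrac12(\Psi_+)_{(1)}+\tfrac12(\Psi_-)_{(1)}$, and since $(\Psi_\pm)_{(1)}\in{\bf CP}_P(\mc A;\hil)$ while $\Phi_1$ is extremal we get $(\Psi_+)_{(1)}=(\Psi_-)_{(1)}=\Phi_1$; likewise $(\Psi_\pm)_{(2)}=\Phi_2$. Thus $\Psi_+$ and $\Psi_-$ are both joint maps of the same pair, so by (a) they coincide and $\Psi$ is extremal. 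For (c) I would substitute the explicit minimal dilation of a $*$-representation into the reduction above: when $\Phi_1$ is a $*$-representation, $P=\Phi_1(\id_{\mc A})$ is a projection and $\big(P\hil,\ \Phi_1(\cdot)|_{P\hil},\ P\big)$ is a minimal dilation, so $\Phi_1$ is automatically extremal (the hypothesis of Theorem \ref{ext} is trivially met). Here $\Lambda(b)$ both compresses to $\Phi_2(b)$ and commutes with each $\Phi_1(a)$, which forces $\Psi(a\otimes b)=\Phi_1(a)\Phi_2(b)$ and the commutation $\Phi_1(a)\Phi_2(b)=\Phi_2(b)\Phi_1(a)$.

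The delicate part is the reduction itself, in two respects. The minor one is checking that $b\mapsto E_b$ is well defined and linear on all of $\mc B$ and genuinely lands in the commutant of $\pi_1(\mc A)$. The more serious one is upgrading ``$\Psi$ is determined on the products $a\otimes b$'' to ``$\Psi$ is determined on all of $\mc A\otimes\mc B$'', since products generate the von Neumann tensor product only after weak-$*$ closure (their norm closure being merely the minimal $C^*$-tensor product). I would resolve this by reconstructing the entire minimal Stinespring dilation $(\mc M,\pi,J)$ of $\Psi$ intrinsically from the determined data: $\mc M$ together with $\pi(\id_{\mc A}\otimes\,\cdot\,)$ is exactly the minimal dilation of the now-fixed completely positive map $\Lambda\colon\mc B\to\mc L(\mc M_1)$, the representation $\pi(\,\cdot\,\otimes\id_{\mc B})$ is the canonical lift of $\pi_1$ commuting with it, and $J$ is recovered from $J_1$; since $(\mc M,\pi,J)$ is then fixed up to unitary equivalence, so is $\Psi=J^*\pi(\cdot)J$. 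In the normal case, and in all the measure-theoretic applications where a map is determined by its values on rectangles, this last step is automatic.
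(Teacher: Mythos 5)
Your core reduction coincides with the paper's. The map you call $\Lambda$ is exactly the $E$ of the paper's Lemma \ref{submin} (the ``$\mc A$-subminimal dilation''), obtained the same way from Proposition \ref{RNd} via $\Psi(\cdot\otimes b)\leq\|b\|\Phi_1$ (you skip the verification that it is completely positive, which the paper proves but the theorem never actually needs); part (a) is then the same ``compression is injective on the commutant'' reading of Theorem \ref{ext}, and part (c) is the same specialization to the trivial dilation of a *-representation. Two of your deviations are improvements. Your proof of (b) --- take marginals of a decomposition, use extremality to force $(\Psi_\pm)_{(r)}=\Phi_r$, then invoke (a) --- is shorter and cleaner than the paper's, which reruns the Radon--Nikodym construction for $\Psi^\pm$. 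And in (c) your dilation $\big(P\hil,\Phi_1(\cdot)|_{P\hil},P\big)$ is correct for a general projection $P=\Phi_1(\id_{\mc A})$, whereas the paper's choice $\mc M=\hil$, $\pi=\Phi_1$, $J=\id$ tacitly assumes $P=\id_\hil$.

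The place where you go beyond the paper is the passage from agreement on the products $a\otimes b$ to agreement on all of $\mc A\otimes\mc B$, and there your diagnosis is right but your repair fails. (The paper elides this step entirely: ``\ldots and thus $\Psi=\Psi'$.'') Reconstructing the minimal dilation $(\mc M,\pi,J)$ of $\Psi$ from $(\mc M_1,\pi_1,J_1,\Lambda)$ presupposes that the vectors $\pi(a\otimes b)J\f$ span a dense subspace of $\mc M$ and that $\pi$ is determined by its commuting restrictions $\pi(\cdot\otimes\id_{\mc B})$ and $\pi(\id_{\mc A}\otimes\cdot)$; for non-normal $\Psi$ both assertions fail, for exactly the same reason the original step is delicate. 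In fact no repair can exist at this level of generality, because without normality the statement itself is false: take $\mc A=\mc B=\ell^\infty(\nat)$, so that $\mc A\otimes\mc B=\ell^\infty(\nat\times\nat)$, let $\hil=\mb C$, $P=1$, and let $\Phi_1=\lim_{\mc U}$, $\Phi_2=\lim_{\mc V}$ be the characters given by free ultrafilters $\mc U,\mc V$ on $\nat$; these are *-representations, hence extremal. The two iterated-limit states $\omega_1(f)=\lim_{m\to\mc U}\lim_{n\to\mc V}f(m,n)$ and $\omega_2(f)=\lim_{n\to\mc V}\lim_{m\to\mc U}f(m,n)$ are both joint maps for this pair and agree on every $a\otimes b$, yet they differ on the characteristic function of $\{(m,n)\,:\,m<n\}$. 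So what your argument (and the paper's) actually proves is uniqueness of the joint map on the norm closure of the algebraic tensor product; uniqueness on the von Neumann tensor product requires restricting to ${\bf NCP}_P$, where the algebraic tensor product is ultraweakly dense and normality upgrades agreement on products to equality (and where your dilation reconstruction also becomes legitimate). All of the paper's applications live in that normal setting, so the results quoted in the introduction are unaffected.
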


Before we present a proof of Theorem \ref{extmarg}, we need some auxiliary results.

\begin{lemma}\label{submin}
Suppose that $\Psi\in{\bf CP}_P(\mc A\otimes\mc B;\hil)$ and $(\mc M,\pi,J)$ is a minimal Stinespring dilation of the first marginal $\Psi_{(1)}$. There is a unique map
$E\in{\bf CP}_\id(\mc B;\mc M)$ such that $[\pi(a),E(b)]=0$ for all $a\in\mc A$, $b\in\mc B$ and
\begin{equation}\label{tulo}
\Psi(a\otimes b)=J^*\pi(a)E(b)J,\qquad a\in\mc A,\quad b\in\mc B \, .
\end{equation}
\end{lemma}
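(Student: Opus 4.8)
The plan is to view this lemma as a $\mc B$-parametrised, operator-valued version of the Radon--Nikodym type result Proposition \ref{RNd}, and to build $E$ by applying that proposition pointwise in $b$ and then assembling the resulting operators into a completely positive map. First I would construct $E(b)$ for every effect $b\in\mc B$, i.e.\ every $b$ with $0\le b\le\id_{\mc B}$. For such $b$ the map $\Phi_b:\mc A\to\mc L(\hil)$, $\Phi_b(a)=\Psi(a\otimes b)$, is completely positive, because $a\mapsto a\otimes b$ is a CP map of $\mc A$ into $\mc A\otimes\mc B$ (writing $b=c^*c$ one has $a\otimes b=(\id_{\mc A}\otimes c)^*(a\otimes\id_{\mc B})(\id_{\mc A}\otimes c)$) composed with $\Psi$. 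Moreover $\Psi_{(1)}-\Phi_b=\Phi_{\id_{\mc B}-b}$ is CP since $\id_{\mc B}-b\ge0$, so $\Phi_b\le\Psi_{(1)}$. Proposition \ref{RNd}, applied to the minimal dilation $(\mc M,\pi,J)$ of $\Psi_{(1)}$, then furnishes a unique $E(b)\in\mc L(\mc M)$ commuting with every $\pi(a)$ and satisfying $\Psi(a\otimes b)=J^*\pi(a)E(b)J$ for all $a$.

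Next I would promote $E$ to a linear map on all of $\mc B$. The engine here is the uniqueness clause of Proposition \ref{RNd}: for the dilation $(\mc M,\pi,J)$ of $\Psi_{(1)}$, an operator commuting with all $\pi(a)$ is uniquely determined by the CP map $a\mapsto J^*\pi(a)(\cdot)J$ it induces (when that map is dominated by $\Psi_{(1)}$). Using this I would check $E(tb)=tE(b)$ and $E(b_1+b_2)=E(b_1)+E(b_2)$ for effects (with $t$ and the sums small enough to stay below $\id_{\mc B}$), then extend $E$ to the positive cone by $E(b)=\|b\|E(b/\|b\|)$, to the self-adjoint part by $E(b_+)-E(b_-)$, and finally to all of $\mc B$ by complexification. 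Because both sides of \eqref{tulo} are linear in $b$ and commutation with $\pi$ is preserved under linear combinations, the identity \eqref{tulo} and the commutation relation persist for the extended $E$. The normalisation $E(\id_{\mc B})=\id_{\mc M}$ follows from the same uniqueness principle, since $\id_{\mc M}$ commutes with $\pi$ and reproduces $\Psi_{(1)}(a)=J^*\pi(a)J$.

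The substantive step, and the one I expect to be the main obstacle, is complete positivity of $E$: the pointwise construction only yields $E(b)\ge0$ for $b\ge0$, whereas CP demands matrix positivity. I would prove this by exploiting the density property (ii) of the minimal dilation. Fixing $b_1,\dots,b_n\in\mc B$, the Hermitian form $(\eta_1,\dots,\eta_n)\mapsto\sum_{j,k}\sis{\eta_j}{E(b_j^*b_k)\eta_k}$ is norm-continuous, so it suffices to prove its positivity on the dense set of vectors of the form $\eta_j=\sum_m\pi(a_{j,m})J\f_{j,m}$. Substituting these, commuting each $E(b_j^*b_k)$ past the representation, and using \eqref{tulo} collapses the expression into $\sum_{\alpha,\beta}\sis{\f_\alpha}{\Psi(c_\alpha^*c_\beta)\f_\beta}$ with $c_\alpha=a_{j,m}\otimes b_j$, which is nonnegative by complete positivity of $\Psi$ on $\mc A\otimes\mc B$. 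This gives $E\in{\bf CP}_\id(\mc B;\mc M)$.

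Finally, uniqueness of $E$ is immediate from the pointwise uniqueness in Proposition \ref{RNd}: any competitor $E'$ with the stated properties satisfies, for each fixed $b$, the same hypotheses as $E(b)$ relative to the dilation of $\Psi_{(1)}$, forcing $E'(b)=E(b)$ and hence $E'=E$.
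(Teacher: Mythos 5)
Your proposal is correct and follows essentially the same route as the paper's own proof: both apply Proposition \ref{RNd} pointwise to the dominated maps $a\mapsto\Psi(a\otimes b)$ for positive $b$ (the paper uses $\Psi_b\le\|b\|\Psi_{(1)}$ where you normalise to effects, a trivial difference), extend linearly, obtain unitality from uniqueness under the minimal dilation, and prove complete positivity of $E$ by evaluating the matrix form on the dense span of vectors $\pi(a)J\f$ and invoking complete positivity of $\Psi$. The uniqueness argument via pointwise uniqueness in Proposition \ref{RNd} also matches the paper's reasoning.
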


\begin{proof}
Fix $b\in\mc B$ and denote the map $\mc A\ni a\mapsto\Psi(a\otimes b)$ by $\Psi_b$. Clearly, if $b\geq0$, then $\Psi_b\in{\bf CP}(\mc A;\hil)$. Furthermore
\begin{equation}
\|b\|\Psi_{(1)}(a)-\Psi_b(a)=\Psi\big(a\otimes(\|b\|\id_{\mc B}-b)\big),
\end{equation}
and hence $\Psi_b\leq\|b\|\Psi_{(1)}$ for any positive $b\in\mc B$. 
Thus, according to Proposition \ref{RNd}, for any positive $b\in\mc B$, we have an operator $E(b)\in\mc L(\mc M)$ that commutes
with $\pi$ such that $\Psi_b(a)=J^*\pi(a)E(b)J$. We may uniquely extend this construction into a linear map $E:\mc B\to\mc L(\mc M)$ such that $[\pi(a),E(b)]=0$ and \eqref{tulo} holds.
The unitality of $E$ is obvious since 
\begin{equation}
J^*\pi(a)J=\Psi_{(1)}(a)=J^*\pi(a)E(\id_{\mc B})J
\end{equation}
and the choice of $E(\id_{\mc B})$ is unique by the minimality of the dilation.

Pick natural numbers $n$, $n_j$ and vectors $\f_{jk}\in\hil$ and $a_{jk}\in\mc A$, $k=1,\ldots,n_j$ $j=1,\ldots,n$ and $b_1,\ldots,b_n\in\mc B$ and define 
\begin{equation}
\eta_j=\sum_{k=1}^{n_j}\pi(a_{jk})J\f_{jk}\in\mc M.
\end{equation}
We may evaluate
\begin{eqnarray*}
\sum_{j,k=1}^n\sis{\eta_j}{E(b_j^*b_k)\eta_k}&=&\sum_{j,k=1}^n\sum_{l=1}^{n_j}\sum_{m=1}^{n_k}\sis{J\f_{jl}}{\pi(a_{jl}^*)E(b_j^*b_k)\pi(a_{km})J\f_{km}}\\
&=&\sum_{j,k=1}^n\sum_{l=1}^{n_j}\sum_{m=1}^{n_k}\sis{J\f_{jl}}{\pi(a_{jl}^*a_{km})E(b_j^*b_k)J\f_{km}}\\
&=&\sum_{j,k=1}^n\sum_{l=1}^{n_j}\sum_{m=1}^{n_k}\sis{\f_{jl}}{\Psi\big((a_{jl}\otimes b_j)^*(a_{km}\otimes b_k)\big)\f_{km}}\geq0
\end{eqnarray*}
using the complete positivity of $\Psi$. Again, the minimality of $(\mc M,\pi,J)$ implies the complete positivity of $E$.
\end{proof}

\begin{definition}\label{subminma}
{\rm
Suppose that $\Psi\in{\bf CP}_P(\mc A\otimes\mc B;\hil)$ is associated with a quadruple $(\mc M,\pi,E,J)$ where $(\mc M,\pi,J)$ is a minimal dilation of the first marginal $\Psi_{(1)}$ and
$E\in{\bf CP}_I(\mc B;\mc M)$ as in Lemma \ref{submin}.
We call the quadruple $(\mc M,\pi,E,J)$ as an {\it $\mc A$-subminimal dilation of $\Psi$}. We define the $\mc B$-subminimal dilation in
the same way.
}
\end{definition}

Let $\Psi\in{\bf CP}_P(\mc A\otimes\mc B;\hil)$ and $(\mc M,\pi,E,J)$ be an $\mc A$-subminimal dilation of $\Psi$. 
By extending the map $a\otimes b\mapsto\pi(a)E(b)$ into a (unique) CP map
$F\in{\bf CP}_\id(\mc A\otimes\mc B;\mc M)$, we have $\Psi(c)=J^*F(c)J$ for all $c\in\mc A\otimes\mc B$. Typically, the map $F$ is not a *-representation; hence the name `subminimal dilation'.

\begin{proof}[Proof of Theorem \ref{extmarg}.]
(a) Suppose that $\Psi$ and $\Psi'$ are joint maps for $\Phi_1$ and $\Phi_2$ and assume that $\Phi_1$ is extremal in ${\bf CP}(\mc A;\hil,P)$ and
$(\mc M,\pi,J)$ is a minimal dilation of $\Phi_1$. Assume that $(\mc M,\pi,E,J)$ and $(\mc M,\pi,E',J)$ are $\mc A$-subminimal dilations for $\Psi$ and respectively for $\Psi'$.
We may write
\begin{equation}
0=\Phi_2(b)-\Phi_2(b)=\Psi(\id_{\mc A}\otimes b)-\Psi'(\id_{\mc A}\otimes b)=J^*\big(E(b)-E'(b)\big)J
\end{equation}
for all $b\in\mc B$. Since the operator $E(b)-E'(b)$ commutes with the representation $\pi$ for all $b\in\mc B$, the extremality of $\Phi_1$ yields $E=E'$. This means that $\Psi(a\otimes b)=
\Psi'(a\otimes b)$ for all $a\in\mc A$ and $b\in\mc B$ and thus $\Psi=\Psi'$.

(b) Suppose that both $\Phi_1$ and $\Phi_2$ are extremal. Item (a) yields the uniqueness of the joint map $\Psi\in{\bf CP}(\mc A\otimes\mc B;\hil,P)$. Let us
make a counter assumption: there are $\Psi^\pm\in{\bf CP}(\mc A\otimes\mc B;\hil,P)$ such that $\Psi^+\neq\Psi^-$ and $\Psi=\frac12\Psi^++\frac12\Psi^-$. Hence also $\Phi_r=
\frac12\Psi^+_{(r)}+\frac12\Psi^-_{(r)}$, $r=1,\,2$. Suppose that $(\mc M,\pi,J)$ is a minimal dilation of $\Phi_1$. Since, especially, $\Psi^\pm_{(1)}\leq2\Phi_1$, combining
the result of Proposition \ref{RNd} and the beginning of the proof of Lemma \ref{submin}, we obtain CP maps $E^\pm:\mc B\to\mc L(\mc M)$ such that $\Psi^\pm(a\otimes b)=J^*\pi(a)
E^\pm(b)J$ for all $a\in\mc A$ and $b\in\mc B$. Since $\Psi^\pm$ are characterized by the quadruples $(\mc M,\pi,E^\pm,J)$, it follows that $E^+\neq E^-$. Suppose that
$b\in\mc B$ is such that $E^+(b)\neq E^-(b)$. Since $\Phi_2$ is extremal, it follows that $\Psi^+_{(2)}=\Psi^-_{(2)}$ and thus
$$
0=\Psi^+_{(2)}(b)-\Psi^-_{(2)}(b)=\Psi^+(\id_{\mc A}\otimes b)-\Psi^-(\id_{\mc A}\otimes b)=J^*\big(E^+(b)-E^-(b)\big)J \, .
$$
The extremality of $\Phi_1$ thus yields that $E^+(b)=E^-(b)$, contradicting our assumption on $b$.

(c) Suppose that $\Phi_1$ is associated with a minimal dilation $(\mc M,\pi,J)$ yielding the $\mc A$-subminimal dilation $(\mc M,\pi,E,J)$ so that (\ref{tulo}) holds.
Let us assume that $\Phi_1$ is a *-homomorphism. Then $J$ is unitary and we may simply choose $\mc M=\hil$, $\pi=\Phi_1$ and $J=I$. It is clear that $\Phi_1$ is extremal and hence the joint
map $\Psi$ is unique. From (\ref{tulo}) it immediately follows that $E=\Phi_2$ and thus $\Phi_2$ commutes with $\Phi_1$ and equation (\ref{tahtiesit}) holds.
\end{proof}

The statements (a)--(f) sketched in the introduction are now consequences of the previous theorem. 
Note that, when viewed as CP maps, any two states $\varrho_1\in\mc S(\hil_1)$ and $\varrho_2\in\mc S(\hil_2)$ have a joint state $\varrho_1\otimes\varrho_2 \in \mc{S}(\hil_1\otimes\hil_2)$.

\begin{cor}
Suppose that $\mc K$, $\hil$, $\hil_1$ and $\hil_2$ are Hilbert spaces, $(\Om,\Sigma)$ and $(\Om',\Sigma')$ are measurable spaces and $\nu:\Sigma\to[0,\infty]$ and $\nu':\Sigma'\to[0,
\infty]$ are $\sigma$-finite measures.
\begin{itemize}
\item[{\rm (a)}] Suppose that $\varrho\in\mc S(\hil_1\otimes\hil_2)$. 
If $\mr{tr}_{\hil_1}[\varrho]=\varrho_2$ or $\mr{tr}_{\hil_2}[\varrho]=\varrho_1$ is pure, then $\varrho=\varrho_1\otimes\varrho_2$.
\item[{\rm (b)\&\rm (e)}] Suppose that $\ms M_1\in\mc O_\nu(\hil)$ and $\ms M_2\in\mc O_{\nu'}(\hil)$ are jointly measurable. If $\ms M_1$ is extremal in $\mc O_\nu(\hil)$ or $\ms M_2$ is extremal in
$\mc O_{\nu'}(\hil)$, then their joint observable $\ms M\in\mc O_{\nu\times\nu'}(\hil)$ is unique. If both $\ms M_1$ and $\ms M_2$ are extremal, then the joint observable is extremal in
$\mc O_{\nu\times\nu'}(\hil)$. If $\ms M_1$ or $\ms M_2$ is sharp, then $\ms M_1$ and $\ms M_2$ commute and their unique joint observable $\ms M\in\mc O_{\nu\times\nu'}(\hil)$ is determined by the condition
$$
\ms M(X\times Y)=\ms M_1(X)\ms M_2(Y),\qquad X\in\Sigma,\quad Y\in\Sigma'.
$$
\item[{\rm (c)\&\rm (f)}] Suppose that $\ms M\in\mc O_\nu(\hil)$ and $\mc E\in\mc C(\mc K;\hil)$ are compatible. If $\ms M$ is extremal in $\mc O_\nu(\hil)$ or $\mc E$ is extremal in $\mc C(\mc K;\hil)$,
then their joint instrument $\Gamma\in\mc I_\nu(\mc K;\hil)$ is unique. If both $M$ and $\mc E$ are extremal, then this $\Gamma$ is extremal in $\mc I_\nu(\mc K;\hil)$. If $\ms M$ is sharp or $\mc E$
is a *-representation, then $[\ms M(X),\mc E(B)]=0$ for all $X\in\Sigma$ and $B\in\mc L(\mc K)$ and the unique joint instrument $\Gamma$ for $\ms M$ and $\mc E$ is given by
$$
\Gamma(X,B)=\ms M(X)\mc E(B)=\sqrt{\ms M(X)}\mc E(B)\sqrt{\ms M(X)},\qquad X\in\Sigma,\quad B\in\mc L(\mc K)
$$
\item[{\rm (d)}] Suppose that $\mc F:\mc T(\hil)\to\mc T(\hil)\otimes\mc T(\hil)$ is a Schr\"odinger channel such that the partial trace $\mr{tr}_1[\mc F(T)]=T$ for all $T\in\mc T(\hil)$. There
is a state $\sigma\in\mc S(\hil)$ such that $\mr{tr}_2[\mc F(T)]=\tr T\,\sigma$ for all $T\in\mc T(\hil)$.

\end{itemize}
\end{cor}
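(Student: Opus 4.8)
The plan is to obtain each item as a direct specialization of Theorem~\ref{extmarg} to an appropriate triple $(\mc A,\mc B,P)$, so that the whole argument reduces to three routine steps: identifying the two von Neumann algebras and the positive operator $P$, translating the physical notions ``pure'', ``sharp'', and ``identity channel'' into the CP-map vocabulary of ``extremal'' and ``$*$-representation'', and reading off the conclusion. All the genuinely new content sits in Theorem~\ref{extmarg}; here the work is bookkeeping, but it rests on three standard identifications of tensor products of von Neumann algebras that I would invoke without proof: $\mc L(\hil_1\otimes\hil_2)=\mc L(\hil_1)\otimes\mc L(\hil_2)$, $L^\infty(\nu\times\nu')=L^\infty(\nu)\otimes L^\infty(\nu')$, and $\mc T(\hil)\otimes\mc T(\hil)=\mc T(\hil\otimes\hil)$.

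For (a) I would take $\mc A=\mc L(\hil_1)$, $\mc B=\mc L(\hil_2)$ and $P=1$, using $\mc S(\hil)={\bf NCP}_1(\mc L(\hil);\mb C)$. A state $\varrho\in\mc S(\hil_1\otimes\hil_2)$ then becomes a map in ${\bf NCP}_1(\mc A\otimes\mc B;\mb C)$ whose marginals correspond to $\mr{tr}_{\hil_2}[\varrho]=\varrho_1$ and $\mr{tr}_{\hil_1}[\varrho]=\varrho_2$, and purity is exactly extremality in $\mc S(\hil_j)$. Since the product state $\varrho_1\otimes\varrho_2$ always has these marginals, the uniqueness clause of Theorem~\ref{extmarg}(a) forces $\varrho=\varrho_1\otimes\varrho_2$ as soon as one marginal is pure. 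For (b)\&(e) I would set $\mc A=L^\infty(\nu)$, $\mc B=L^\infty(\nu')$ and $P=\id_\hil$; joint measurability is precisely compatibility, a sharp observable is a $*$-homomorphism (projection-valued measure), and the three clauses of Theorem~\ref{extmarg} give, in order, uniqueness of $\ms M$, extremality of $\ms M$, and the product formula $\ms M(X\times Y)=\ms M_1(X)\ms M_2(Y)$. Item (c)\&(f) is the same with $\mc B=\mc L(\mc K)$: the marginals of a joint instrument $\Gamma$ are $\Gamma(\cdot,\id)=\ms M$ and $\Gamma(\Om,\cdot)=\mc E$, so Theorem~\ref{extmarg} applies verbatim, and the alternative form $\sqrt{\ms M(X)}\,\mc E(B)\sqrt{\ms M(X)}$ follows because $[\ms M(X),\mc E(B)]=0$ lets $\sqrt{\ms M(X)}$, a norm limit of polynomials in $\ms M(X)$, commute with $\mc E(B)$.

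Part (d) is the one requiring care, as it is phrased in the Schr\"odinger picture. First I would pass to the Heisenberg picture, replacing $\mc F$ by the dual channel $\Phi\in{\bf NCP}_{\id}(\mc L(\hil)\otimes\mc L(\hil);\hil)$ fixed by $\tr{\mc F(T)\,(a\otimes b)}=\tr{T\,\Phi(a\otimes b)}$, so that the partial traces $\mr{tr}_2[\mc F(\cdot)]$ and $\mr{tr}_1[\mc F(\cdot)]$ become the marginals $\Phi_{(1)}$ and $\Phi_{(2)}$. The hypothesis $\mr{tr}_1[\mc F(T)]=T$ says exactly that $\Phi_{(2)}$ is the identity map on $\mc L(\hil)$, which is a $*$-representation. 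Theorem~\ref{extmarg}(c) then yields $\Phi(a\otimes b)=\Phi_{(1)}(a)\,b$ together with commutativity, and this last clause is the crux: since $\Phi_{(1)}(a)$ commutes with every $b\in\mc L(\hil)$, it lies in the center $\mb C\,\id$, whence $\Phi_{(1)}(a)=\lambda(a)\,\id$ for a normal state $\lambda$. Writing $\lambda(\cdot)=\tr{\sigma\,\cdot}$ for the corresponding $\sigma\in\mc S(\hil)$ and translating back gives $\mr{tr}_2[\mc F(T)]=\tr{T}\,\sigma$.

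The main obstacle is therefore not any single hard estimate but the correct packaging: matching the two pictures in (d) and extracting scalarity from commutativity with all of $\mc L(\hil)$, and, more routinely, checking that extremality in the smaller sets $\mc O_\nu(\hil)$ agrees with extremality in the larger $\mc O_\Sigma(\hil)$ used implicitly in the theorem---a point already arranged by the absolute-continuity discussion of Section~\ref{sec:ext}, which I would simply cite.
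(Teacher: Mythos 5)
Your proposal is correct and follows essentially the same route as the paper's own proof: items (b)\&(e) and (c)\&(f) are read off directly from Theorem~\ref{extmarg}, item (a) is obtained via the identification $\mc S(\hil)\simeq{\bf NCP}_1(\mc L(\hil);\mb C)$ plus the observation that $\varrho_1\otimes\varrho_2$ is a joint map, and item (d) is handled by dualizing to the Heisenberg picture, applying part (c), and extracting scalarity of the remaining marginal from its commuting with all of $\mc L(\hil)$. The only difference is cosmetic: you label the partial traces with the opposite convention (so for you the \emph{second} Heisenberg marginal is the identity, whereas in the paper it is the first), which by the symmetry of the two factors changes nothing.
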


\begin{proof}
Items (b)\&(e) and (c)\&(f) are direct consequences of Theorem \ref{extmarg}. 
Let us first concentrate on item (a). Given any Hilbert space $\hil$, the set $\mc S(\hil)$ of state operators on $\hil$ is in
bijective affine correspondence with ${\bf NCP}(\mc L(\hil);\mb C,1)$ when one defines the map $\mc S(\hil)\ni\varrho\mapsto\Phi^\varrho$, $\Phi^\varrho(C)=\tr{\varrho C}$, $C\in\mc L(\hil)$. Suppose that
$\varrho\in\mc S(\hil_1\otimes\hil_2)\simeq{\bf NCP}(\mc L(\hil_1)\otimes\mc L(\hil_2);\mb C,1)$. The marginals of the corresponding positive map $\Phi^\varrho$ are $\Phi^\varrho_r$, $\Phi^\varrho_1(A)=
\Phi^\varrho(A)=\tr{\varrho(A\otimes I_{\hil_2})}=\tr{\varrho_1A}$ and similarly $\Phi^\varrho_2(B)=\tr{\varrho_2B}$ for all $A\in\mc L(\hil_1)$ and $B\in\mc L(\hil_2)$, where $\varrho_1=\mr{tr}_{\hil_2}[\varrho]$
and $\varrho_2=\mr{tr}_{\hil_1}[\varrho]$. It now follows from Theorem \ref{extmarg} that if either one of the marginals, i.e.\ partial traces, of a state $\varrho\in\mc S(\hil_1\otimes\hil_2)$ is pure, then
there is no other state $\varrho'\in\mc S(\hil_1\otimes\hil_2)$ that has the same marginals. Since especially the state $\varrho_1\otimes\varrho_2$ has the marginals $\varrho_1$ and $\varrho_2$, the claim
follows.

Let us prove item (d). Denote the transpose (Heisenberg channel) of $\mc F$ by $\mc E$, i.e.\ $\mc E\in\mc C(\hil\otimes\hil;\hil)$. We may write
$$
\tr{\mc E_{(1)}(A)T}=\tr{\mc E(A\otimes I)T}=\tr{(A\otimes I)\mc F(T)}=\tr{\mr{tr}_1[\mc F(T)]A}=\tr{TA}
$$
for all $A\in\mc L(\hil)$ and $T\in\mc T(\hil)$. Hence $\mc E_{(1)}$ is the identity map and especially a *-homomorphism. Thus, according to item (c) of Theorem \ref{extmarg}, the second
marginal $\mc E_{(2)}$ must take values in the center of $\mc L(\hil)$, i.e.\ there is a state operator $\sigma\in\mc S(\hil)$ such that $\mc E_{(2)}(B)=\tr{\sigma B}\id$ for all $B\in\mc L(\hil)$. 
One easily sees that this means $\mr{tr}_2[\mc F(T)]=\tr{T}\sigma$ for all $T\in\mc T(\hil)$.
\end{proof}

\section{Causal channels on bipartite systems}\label{sec:comp}

In this section we study a special consequence of the results obtained in the previous sections to the structure of channels on bipartite systems.
Before we switch to the channel terminology, we present the essential result in a general form.
  
Suppose that $\mc A$ and $\mc B$ are von Neumann algebras, $\hil_1$ and $\hil_2$ are Hilbert
spaces, and $P_1\in \lin{\hil_1}$ and $P_2\in \lin{\hil_2}$ are positive operators.
Let $\Phi_1\in{\bf CP}_{P_1}(\mc A;\hil_1)$ and $\Phi_2\in{\bf CP}_{P_2}(\mc B;\hil_2)$. 
There is a single CP map
$\Psi\in{\bf CP}_{P_1\otimes P_2}(\mc A\otimes\mc B;\hil_1\otimes\hil_2)$ such that $\Psi(a\otimes b)=\Phi_1(a)\otimes\Phi_2(b)$ for all $a\in\mc A$ and $b\in\mc B$, and we denote $\Psi\equiv\Phi_1\otimes\Phi_2$. 
We denote the marginals of $\Phi_1\otimes\Phi_2$ by $\tilde\Phi_1$ and $\tilde\Phi_2$, i.e.\ $\tilde\Phi_1(a)=\Phi_1(a)\otimes P_2$ and $\tilde\Phi_2(b)=P_1\otimes\Phi_2(b)$. 

\begin{proposition}\label{laajext}
Retain the notations defined above. If $\Phi_1$ is extremal in ${\bf CP}_{P_1}(\mc A;\hil_1)$, then $\tilde\Phi_1$ is extremal in ${\bf CP}_{P_1\otimes P_2}(\mc A\otimes\mc B;\hil_1\otimes
\hil_2)$.
\end{proposition}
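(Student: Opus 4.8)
The plan is to establish the extremality of $\tilde\Phi_1$ by a direct application of the criterion in Theorem~\ref{ext}. Recall that $\tilde\Phi_1$ is the first marginal of $\Phi_1\otimes\Phi_2$, hence a completely positive map on $\mc A$ with $\tilde\Phi_1(\id_{\mc A})=\Phi_1(\id_{\mc A})\otimes P_2=P_1\otimes P_2$; its extremality is accordingly to be tested inside ${\bf CP}_{P_1\otimes P_2}(\mc A;\hil_1\otimes\hil_2)$. The decisive first step is to manufacture a minimal Stinespring dilation of $\tilde\Phi_1$ out of a minimal dilation $(\mc M,\pi,J)$ of $\Phi_1$ (Theorem~\ref{dilat}) together with the sole piece of $\mc B$-information surviving in the marginal, namely $P_2=\Phi_2(\id_{\mc B})$. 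Factorising $P_2=P_2^{1/2}P_2^{1/2}$, writing $\hil_2':=\overline{\ran P_2}$, and regarding $V:=P_2^{1/2}$ as a map $\hil_2\to\hil_2'$ (so that $V^*V=P_2$), I would propose the triple
\begin{equation*}
\big(\mc M\otimes\hil_2',\ \pi\otimes\id_{\hil_2'},\ J\otimes V\big).
\end{equation*}

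The dilation identity is then immediate, $(J\otimes V)^*(\pi(a)\otimes\id)(J\otimes V)=(J^*\pi(a)J)\otimes(V^*V)=\Phi_1(a)\otimes P_2=\tilde\Phi_1(a)$, and minimality I would read off from the fact that the cyclic subspace $\overline{\span}\{(\pi(a)\otimes\id)(J\otimes V)(\f_1\otimes\f_2)\}$ factorises as $\overline{\span}\{\pi(a)J\f_1\}\otimes\overline{\span}\{V\f_2\}$; the first factor is dense in $\mc M$ by minimality of $(\mc M,\pi,J)$, while the second is $\ran P_2^{1/2}$, dense in $\hil_2'=\overline{\ran P_2}$ by the very choice of $\hil_2'$. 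Hence the triple above is a bona fide minimal dilation of $\tilde\Phi_1$, and Theorem~\ref{ext} applies.

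It then remains to run the criterion. Suppose $E\in\lin{\mc M\otimes\hil_2'}$ satisfies $[E,\pi(a)\otimes\id]=0$ for all $a\in\mc A$ and $(J\otimes V)^*E(J\otimes V)=0$; the goal is $E=0$. I would analyse $E$ through its slices along the second leg: for $\psi,\psi'\in\hil_2'$ define $E_{\psi',\psi}\in\lin{\mc M}$ by $\sis{m'}{E_{\psi',\psi}m}=\sis{m'\otimes\psi'}{E(m\otimes\psi)}$. Commutation of $E$ with the ampliation $\pi(a)\otimes\id$ transfers to each slice, giving $E_{\psi',\psi}\in\pi(\mc A)'$, and testing $(J\otimes V)^*E(J\otimes V)=0$ against vectors $\f_1\otimes\f_2$ and $\f_1'\otimes\f_2'$ yields $J^*E_{V\f_2',V\f_2}J=0$ for all $\f_2,\f_2'\in\hil_2$. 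At this point the extremality of $\Phi_1$ enters: Theorem~\ref{ext} applied to $(\mc M,\pi,J)$ turns the two conditions $E_{V\f_2',V\f_2}\in\pi(\mc A)'$ and $J^*E_{V\f_2',V\f_2}J=0$ into $E_{V\f_2',V\f_2}=0$. Since $\{V\f_2:\f_2\in\hil_2\}=\ran P_2^{1/2}$ is dense in $\hil_2'$ and $(\psi',\psi)\mapsto E_{\psi',\psi}$ is continuous, every slice vanishes, which says precisely that $E=0$.

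The step I expect to be the main obstacle is the slice analysis just sketched: one must justify rigorously that commutation of $E$ with the entire ampliated algebra $\pi(\mc A)\otimes\id_{\hil_2'}$ is equivalent to each slice lying in $\pi(\mc A)'$ — equivalently, that $E$ belongs to the commutant $\pi(\mc A)'\mathbin{\overline{\otimes}}\lin{\hil_2'}$ — and that the vanishing of all slices forces $E=0$. These are standard facts about operators on a tensor product commuting with a first-leg ampliation, but they carry the full weight of the proof; once they are in place, the extremality of $\Phi_1$ delivers the conclusion, the algebra $\mc B$ entering — correctly and unavoidably — only through the single operator $P_2=\Phi_2(\id_{\mc B})$ and the subspace $\hil_2'=\overline{\ran P_2}$ it determines.
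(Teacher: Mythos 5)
Your proof is correct and takes essentially the same route as the paper's: you construct the same minimal dilation (the paper makes exactly your choice $J_2=\sqrt{P_2}$ with dilation space the closure of $\ran\sqrt{P_2}$, giving $(\mc M\otimes\mc M_2,\pi\otimes\id,J\otimes\sqrt{P_2})$), and your slice operators $E_{\psi',\psi}$ are precisely the paper's $R_{\psi'}ER_\psi^*$, whose intertwining relations $\pi(a)R_\psi=R_\psi\tilde\pi(a)$ settle by a one-line computation the commutation-transfer you flag as the main obstacle, so no appeal to the commutant $\pi(\mc A)'\mathbin{\overline{\otimes}}\lin{\hil_2'}$ is needed. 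The only cosmetic difference is that you run the criterion of Theorem~\ref{ext} directly (all slices vanish, hence $E=0$), whereas the paper argues contrapositively (a nonzero $D$ has a nonzero slice $R_\chi DR_\psi^*$ contradicting extremality of $\Phi_1$).
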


\begin{proof}
We prove the claim by showing that if $\tilde\Phi_1$ is {\it not} extremal in ${\bf CP}_{P_1\otimes P_2}(\mc A\otimes\mc B;\hil_1\otimes\hil_2)$, then $\Phi_1$ is not extremal in ${\bf CP}_{P_1}(\mc A;\hil_1)$.
Assume that $(\mc M,\pi,J)$ is a minimal dilation of $\Phi_1$.
Further assume that $\mc M_2$ is a Hilbert space and $J_2:\hil_2\to\mc M_2$ is a linear map such that the image space $J_2(\hil_2)$ is dense in
$\mc M_2$ and $P_2=J_2^*J_2$.  
(A possible choice is, for instance, $J_2=\sqrt{P_2}$ and $\mc M_2$ is the closure of the range of $\sqrt{P_2}$.)
We denote by $\tilde\pi$ the map defined by $\tilde{\pi}(a)=\pi(a)\otimes\id_{\mc M_2}$, $a\in\mc{A}$.
Then $(\mc M\otimes\mc M_2,\tilde\pi,J\otimes J_2)$ is a minimal dilation for $\tilde\Phi_1$. 
For each $\psi\in\mc M_2$, we define the linear operator
$R_\psi:\mc M\otimes\mc M_2\to\mc M$ through 
\begin{equation}
R_\psi(\zeta\otimes\xi)=\sis{\psi}{\xi}\zeta
\end{equation}
 for all $\zeta\in\mc M$ and $\xi\in\mc M_2$. 
 Pick any $\zeta\in\mc M$, $\psi\in\mc M_2$ and
$a\in\mc A$. We have
\begin{equation}
R_\psi^*\pi(a)\zeta=\big(\pi(a)\zeta\big)\otimes\psi=\tilde\pi(a)(\zeta\otimes\psi)=\tilde\pi(a)R_\psi^*\zeta.
\end{equation}
This yields 
\begin{equation}
R_\psi^*\pi(a)=\tilde\pi(a)R_\psi^* \, , \qquad
\pi(a)R_\psi=R_\psi\tilde\pi(a)
\end{equation}
for all $\psi\in\mc M_2$ and $a\in\mc A$.

Suppose that $D\in\mc L(\mc M\otimes\mc M_2)$, $D\neq0$, is such that $\tilde\pi(a)D=D\tilde\pi(a)$ for all $a\in\mc A$ and $(J\otimes J_2)^*D(J\otimes J_2)=0$. Since $D$ is nonzero,
there are $\chi,\,\psi\in\mc M_2$ such that $R_\chi DR_\psi^*\neq0$; this is obvious because $R_\psi^*R_\psi=\id_{\mc M}\otimes \ktb{\psi}{\psi}$. Using the properties of $R_\psi$ and $R_\chi$ established above, one obtains
\begin{equation}
R_\chi DR_\psi^*\pi(a)=R_\chi D\tilde\pi(a)R_\psi^*=R_\chi\tilde\pi(a)DR_\psi^*=\pi(a)R_\chi DR_\psi^* \, , 
\end{equation}
i.e.,\ the operator $R_\chi DR_\psi^*$ commutes with $\pi$. We also obtain
\begin{equation}
\sis{\f}{J^*R_\chi DR_\psi^*J\f}=\sis{J\f\otimes\chi}{D(J\f\otimes\psi)}=0 \, , 
\end{equation}
where the last equality follows from the last defining property of $D$ and the fact that $\chi$ and $\psi$ can be approximated with vectors from the image space $J_2(\mc K_2)$. Hence
$D':=R_\chi DR_\psi^*\in\mc L(\mc M)$ is a nonzero operator that commutes with $\pi$ and $J^*D'J=0$. According to Theorem \ref{ext}, $\Phi_1$ is not extremal in ${\bf CP}_{P_1}(\mc A;
\hil_1)$.
\end{proof}

Suppose that $\hil_1,\hil_2$ and $\hik_1,\hik_2$ are separable Hilbert spaces. 
Let us study channels $\mc E\in\mc C(\mc K_1\otimes\mc K_2;\hil_1\otimes\hil_2)$. 
This kind of channel acting on a compound system may have independent action on the first system in the
sense that there exists a channel $\mc E^1\in\mc C(\mc K_1;\hil_1)$ such that the equivalent conditions
\begin{eqnarray}\label{2rton}
\mc E(A\otimes \id)&=&\mc E^1(A)\otimes \id \, , \quad A\in\mc L(\mc K_1) \qquad \mr{or}\nonumber\\
\mr{tr}_{\mc K_2}[\mc E_*(S\otimes T)]&=&\tr{T}\mc E^1_*(S) \, , \quad S\in\mc T(\hil_1),\ T\in\mc T(\hil_2)
\end{eqnarray}
hold. 
On the other hand, there could be a channel $\mc E^2\in\mc C(\mc K_2;\hil_2)$ such that
\begin{eqnarray}\label{1rton}
\mc E(\id\otimes B)&=&\id\otimes\mc E^2(B),\quad B\in\mc L(\mc K_2)\qquad \textrm{or equivalently}\nonumber\\
\mr{tr}_{\mc K_1}[\mc E_*(S\otimes T)]&=&\tr{S}\mc E^2_*(T) \, , \quad S\in\mc T(\hil_1),\ T\in\mc T(\hil_2) \, .
\end{eqnarray}
With the terminology used in \cite{EgScWe02}, we say that a channel satisfying the conditions \eqref{2rton} and \eqref{1rton} is {\it causal}.

A special class of causal channels are local channels.
A channel $\mc E\in\mc C(\mc K_1\otimes\mc K_2;\hil_1\otimes\hil_2)$ is \emph{local} if there exist channels $\mc E^1\in\mc C(\mc K_1;\hil_1)$ and $\mc E^2\in\mc C(\mc K_2;\hil_2)$ such that the equivalent conditions
\begin{eqnarray}\label{localchannel}
\mc E(A\otimes B)&=&\mc E^1(A)\otimes\mc E^2(B) \, , \qquad A\in\mc L(\mc K_1) \, , \   B\in\mc L(\mc K_2)\quad\mr{or}\nonumber\\
\mc E_*(S\otimes T)&=&\mc E^1_*(S)\otimes\mc E^2_*(T) \, , \qquad S\in\mc T(\hil_1) \, , \  T\in\mc T(\hil_2)
\end{eqnarray}
hold.
Given channels $\mc E^1\in\mc C(\mc K_1;\hil_1)$ and $\mc E^2\in\mc C(\mc K_2;\hil_2)$, we can always combine them to the joint local channel $\mc E^1\otimes\mc E^2$, but there might be
other joint channels with the same marginals $a\mapsto\mc E^1(a)\otimes \id_{\hil_2}$ and $b\mapsto \id_{\hil_1}\otimes\mc E^2(b)$. 
However, Proposition \ref{laajext} implies that, in the case of an
extremal marginal, the only joint map is the local one $\mc E^1\otimes\mc E^2$.
In other words, a causal channel with an extremal marginal is local.

\begin{cor}\label{causallocal}
Assume that a channel $\mc E\in\mc C(\mc K_1\otimes\mc K_2;\hil_1\otimes\hil_2)$ is causal, i.e.,\ there are channels $\mc E^1\in\mc C(\mc K_1;\hil_1)$ and $\mc E^2\in\mc C(\mc K_2;\hil_2)$ such that the conditions
(\ref{2rton}) and (\ref{1rton}) hold. If $\mc E^1$ is extremal in $\mc C(\mc K_1;\hil_1)$ or $\mc E^2$ is extremal in $\mc C(\mc K_2;\hil_2)$, then $\mc E$ is local, i.e.,\ $\mc E=\mc E^1\otimes
\mc E^2$.
\end{cor}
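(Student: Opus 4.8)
The plan is to recognize that the causal channel $\mc E$ and the candidate local channel $\mc E^1\otimes\mc E^2$ share the very same pair of marginals, and then to invoke the uniqueness statement in Theorem \ref{extmarg}(a) after transporting extremality to the compound output space by means of Proposition \ref{laajext}.

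First I would fix the bookkeeping by setting $\mc A=\mc L(\mc K_1)$, $\mc B=\mc L(\mc K_2)$, and taking the \emph{common} output space to be $\hil_1\otimes\hil_2$ with $P=\id$. This is the decisive step: although $\mc E^1$ and $\mc E^2$ individually act into the different spaces $\hil_1$ and $\hil_2$, the dressed maps $\tilde\Phi_1(A)=\mc E^1(A)\otimes\id_{\hil_2}$ and $\tilde\Phi_2(B)=\id_{\hil_1}\otimes\mc E^2(B)$ both take values in $\mc L(\hil_1\otimes\hil_2)$ and send the unit to $\id_{\hil_1\otimes\hil_2}$. Hence $\tilde\Phi_1\in{\bf CP}_\id(\mc L(\mc K_1);\hil_1\otimes\hil_2)$ and $\tilde\Phi_2\in{\bf CP}_\id(\mc L(\mc K_2);\hil_1\otimes\hil_2)$, which places us squarely in the framework of Section \ref{sec:joint}.

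Next I would check that both $\mc E$ and $\mc E^1\otimes\mc E^2$ are joint maps of $\tilde\Phi_1$ and $\tilde\Phi_2$. For $\mc E^1\otimes\mc E^2$ this is immediate from the defining identity $(\mc E^1\otimes\mc E^2)(A\otimes B)=\mc E^1(A)\otimes\mc E^2(B)$ together with the unitality of the two channels. For $\mc E$ it is exactly the content of causality: the marginal $\mc E_{(1)}(A)=\mc E(A\otimes\id)$ equals $\tilde\Phi_1(A)$ by condition \eqref{2rton}, while $\mc E_{(2)}(B)=\mc E(\id\otimes B)$ equals $\tilde\Phi_2(B)$ by condition \eqref{1rton}. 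Thus the two channels have identical marginals. Assuming without loss of generality that $\mc E^1$ is extremal in $\mc C(\mc K_1;\hil_1)$, I would then apply Proposition \ref{laajext} with $P_1=\id_{\hil_1}$ and $P_2=\id_{\hil_2}$ to upgrade this to extremality of $\tilde\Phi_1$ in ${\bf CP}_\id(\mc L(\mc K_1);\hil_1\otimes\hil_2)$. Now the hypothesis of Theorem \ref{extmarg}(a) is met by the marginal $\tilde\Phi_1$, so the joint map of $\tilde\Phi_1$ and $\tilde\Phi_2$ is unique; since $\mc E$ and $\mc E^1\otimes\mc E^2$ are both such joint maps, they must coincide, giving $\mc E=\mc E^1\otimes\mc E^2$, i.e.\ locality. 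The case where $\mc E^2$ is extremal is handled symmetrically by working with the $\mc B$-subminimal picture.

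I expect the only genuine obstacle to be the identification of the correct ambient convex set: one must resist applying Theorem \ref{extmarg} directly to $\mc E^1$ and $\mc E^2$, whose output spaces $\hil_1$ and $\hil_2$ differ, and instead pass to the dressed marginals living in the common space $\hil_1\otimes\hil_2$, with Proposition \ref{laajext} supplying precisely the bridge that carries extremality of $\mc E^1$ over to extremality of $\tilde\Phi_1$ there. Everything else is routine, modulo the standing remark that all maps in sight are channels and hence normal, so that the ${\bf CP}$-formulations of Theorem \ref{extmarg} and Proposition \ref{laajext} apply verbatim in the ${\bf NCP}$ setting because every map occurring in a convex decomposition of a normal map is again normal.
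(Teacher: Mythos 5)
Your proposal is correct and follows essentially the same route as the paper: the paper likewise observes that the causal channel $\mc E$ and the local channel $\mc E^1\otimes\mc E^2$ share the marginals $A\mapsto\mc E^1(A)\otimes\id_{\hil_2}$ and $B\mapsto\id_{\hil_1}\otimes\mc E^2(B)$, lifts extremality of $\mc E^1$ (or $\mc E^2$) to the dressed marginal via Proposition \ref{laajext}, and concludes uniqueness of the joint map from Theorem \ref{extmarg}(a). Your reading of Proposition \ref{laajext} as placing $\tilde\Phi_1$ in ${\bf CP}_{\id}(\mc L(\mc K_1);\hil_1\otimes\hil_2)$ is the intended one, and the remark on normality is a correct (if not strictly necessary) precaution, since Theorem \ref{extmarg} is stated without normality assumptions.
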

 
It is easy to see that convex combinations of local channels are causal. 
If we restrict to the case where $\mc E$ is assumed to be in the convex hull of local channels, then the above result
could be proved without the results on extremal marginals developed in this paper. 
However, the result of Corollary \ref{causallocal} is more extensive since there are causal channels that cannot
be represented as convex combinations of local channels; concrete examples are given in \cite{BeGoNiPr01}, \cite{DaFaPe11}.

\section{Conclusions and Remarks}\label{sec:end}

We have identified extremality of a marginal map as a sufficient criterion for the uniqueness of the joint map. 
This result directly implies several well-known results.
For instance, if either one of two jointly measurable observables is sharp, then they commute and their joint observable is unique
and of the product form. 
Another well-known result states that if a sharp observable $\ms M$ and a channel $\mc E$ are parts of a single instrument, then this instrument $\Gamma$ is unique and $\Gamma(X,B)=\ms M(X)\mc E(B)$.
Our main result states that in the previous results the uniqueness part remains valid also in the case of a looser condition that one of the marginals of the joint map is extremal. 
This is a true expansion of previous results since, for instance, there are extremal observables that are not sharp observables but are physically relevant \cite{HePe09}.

Finally, we emphasize that the extremality of a marginal is not a necessary condition for the uniqueness of the joint map.
Namely, a joint map of two compatible maps can be unique and extremal even if the constituents are not extremal.
To give an example of this kind of occurence, we consider binary observables on a qubit system.
As proved in \cite{Busch86}, the observables $\pm 1 \mapsto \half (I \pm  t \sigma_x)$ and  $\pm 1 \mapsto \half (I \pm  t \sigma_y)$ are compatible exactly when $0\leq t \leq 1/\sqrt{2}$.
It is clear from the proof of \cite[Theorem 4.5.]{Busch86} that in the boundary case $t=1/\sqrt{2}$ the observables have a unique joint observable (the four spheres used in the proof have a single point in their intersection).
A qubit observable is extremal if and only if each effect is rank-1 and the set of effects is linearly independent \cite{DaLoPe05}.
Therefore, it is easy to see that the unique joint observable is extremal while the marginal observables are not.

\section*{Acknowledgements}

Authors wish to thank Takayuki Miyadera and Michal Sedlak for comments on the manuscript and acknowledge financial support from the Academy of Finland (grant no. 138135). E. H. is also
funded by the Finnish Cultural Foundation.

\end{document}